\theoremstyle{plain}
\newtheorem{thm}{Theorem}
\theoremstyle{plain}
\theoremstyle{definition}
\newtheorem{assumption}{Assumption}
\newtheorem{definition}{Definition}
\newtheorem{lemma}{Lemma}
\newtheorem*{lemma*}{Lemma}
\theoremstyle{remark}
\newtheorem{remark}{Remark}
\newcommand{\nm}[1]{\textcolor{blue}{\textbf{[NM: #1]}}}
\title{Multi-Scale Spectrum Sensing
in\\ Dense Multi-Cell Cognitive Networks}
\author{Nicolo Michelusi, Matthew Nokleby, Urbashi Mitra, and Robert Calderbank
\thanks{Part of this work appeared
at Globecom 2015 \cite{MichelusiGCOM},
 ICC 2017 \cite{MicheICC} and Asilomar 2017 \cite{MicheAsilomar}.}
\thanks{The research of U. Mitra has been funded in part by the following grants: ONR N00014-15-1-2550, ONR N00014-09-1-0700, NSF CNS-1213128, NSF CCF-1410009, AFOSR FA9550-12-1-0215, NSF CPS-1446901,
the Royal Academy of Engineering, the Fulbright Foundation and the Leverhulme Trust.}
\thanks{The research of N. Michelusi  has been funded by NSF under grant CNS-1642982, and by DARPA under grant \#108818.}
\thanks{N. Michelusi is with the School of Electrical and Computer Engineering, Purdue University. email: michelus@purdue.edu. M. Nokleby is with the Dept. of Electrical and Computer Engineering, Wayne State University. email:matthew.nokleby@wayne.edu. U. Mitra is with the Dept. of Electrical Engineering, University of Southern California. email: ubli@usc.edu. R. Calderbank is with the Dept. of Electrical Engineering, Duke University. email: robert.calderbank@duke.edu.}
}
\begin{document}
\maketitle
\begin{abstract}
Multi-scale spectrum sensing is proposed to overcome the cost of full network state information on the spectrum occupancy of primary users (PUs) in dense multi-cell cognitive networks. Secondary users (SUs) estimate the local spectrum occupancies and aggregate them hierarchically to 
estimate spectrum occupancy at multiple spatial scales. Thus, SUs obtain fine-grained estimates of spectrum occupancies of nearby cells, more relevant to scheduling tasks, and coarse-grained estimates of those of distant cells. 
 An \emph{agglomerative clustering} algorithm is proposed to design a cost-effective aggregation tree, matched to the structure of interference, robust to
 \emph{local estimation errors} and \emph{delays}.
Given these multi-scale estimates, the SU traffic
is adapted in a decentralized fashion  in each cell, to optimize the trade-off among SU cell throughput, interference caused to PUs, and mutual SU interference.
Numerical evaluations demonstrate
a small degradation in SU cell throughput (up to 15\% for a 0dB interference-to-noise ratio experienced at PUs)
 compared to a scheme with full network state information,
using only one-third of the cost incurred in the exchange of spectrum estimates.
The proposed \emph{interference-matched} design is shown to significantly outperform a random tree design, by providing more relevant information for network control,
and a state-of-the-art consensus-based algorithm, which does not leverage the
spatio-temporal structure of interference across the network.
\end{abstract}
\section{Introduction}
The recent proliferation of mobile devices has been exponential in number as well as heterogeneity \cite{CISCO}, demanding new tools for the design of agile wireless networks \cite{pcast}. Fifth-generation (5G) cellular systems will meet this challenge in part by deploying {\em dense, heterogeneous} networks, which must flexibly adapt to time-varying network conditions. Cognitive radios \cite{Mitola} have the potential to improve spectral efficiency by enabling secondary users (SUs) to exploit resource gaps left by legacy primary users (PUs)~\cite{Peha}. However, estimating these resource gaps in real-time becomes increasingly challenging 
with the increasing network densification, due to the signaling overhead required to learn the network state \cite{Wu2012}. Furthermore,
network densification results in irregular network topologies. These features  demand effective interference management to fully leverage spatio-temporal spectrum access opportunities.



To meet this challenge, we develop and analyze spectrum utilization and interference management techniques for \emph{dense} cognitive radios with \emph{irregular} interference patterns.
 We consider a multi-cell network with a set of PUs and a dense set of opportunistic SUs, which seek access to locally unoccupied spectrum. The SUs must estimate the channel occupancy of the PUs across the network based on local measurements. In principle, these measurements can be collected at a fusion center \cite{Letaief,Ding,Ejaz}, but centralized estimation may incur unacceptable delays and overhead \cite{Goeckel,Wu2012}.
To reduce this cost and provide a form of coordination, neighboring cells may inform each other of spectrum they are occupying \cite{Vasilakos}; however, this scheme cannot
 manage interference beyond the cell neighborhood,
which may be significant in dense topologies.

 \begin{figure}[t]
\centering  
\includegraphics[width=.5\linewidth,trim = 0mm 0mm 0mm 0mm,clip=true]{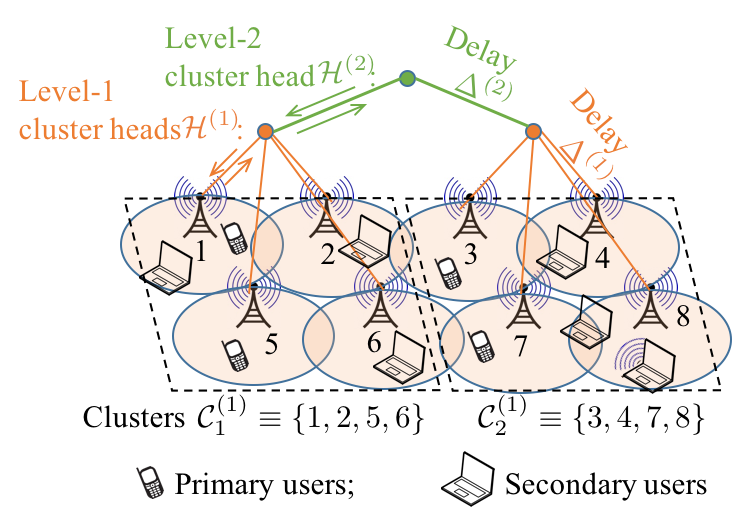}
\caption{System model (see notation in  Sec. \ref{sec:est}).}
\label{fig:sysmo}
\end{figure}


We address this challenge by designing a cost-effective {\em multi-scale} solution to detect and leverage spatio-temporal spectrum access opportunities across the network, by exploiting the structure and irregularities of interference. 
To do so, note that the interference caused by a given SU depends on its position in the network, as depicted in Fig. \ref{fig:sysmo}: PUs closer to this SU will experience stronger interference than PUs farther away. Therefore,
 such SU should estimate more accurately the state of nearby PUs,
 in order to perform more informed local control decisions to access the spectrum or remain idle. In contrast,
 the state of PUs farther away, which experience less interference from such SU, is less relevant to these control decisions, hence coarser spectrum estimates may suffice.
  With this in mind, the goal of our formulation is the design of a cost-effective spectrum sensing architecture to aid local network control,
   which enables each SU to estimate the spectrum occupancy at different spatial scales (hence the name "multi-scale"), so as to 
 possess an accurate and fine-grained estimate of the occupancy of PUs in the vicinity, and coarser estimates of the occupancy states of PUs farther away.
 To achieve this goal, we use a \emph{hierarchical} estimation approach
 resilient to delays and errors in the information exchange and estimation processes,
  inspired by \cite{nokleby:JSTSP13} in the context of averaging consensus \cite{benezit:IT10}:
local measurements are fused hierarchically up a tree, which provides aggregate spectrum occupancy information for clusters of cells at larger and larger scales. 
Thus, SUs acquire precise information on the spectrum occupancies of nearby cells --
these cells are more susceptible to interference caused by nearby SUs --
 and coarse, aggregate information on the occupancies of faraway cells.
 By generating spectrum occupancy estimates at multiple spatial scales (i.e., multi-scale), this scheme permits an efficient trade-off of estimation quality, cost of aggregation, estimation delay, and provides a cost-effective means to acquire information most relevant to network control. We derive the ideal estimator of the global spectrum occupancy from the multi-scale measurements, and we design the SU traffic in each cell in a decentralized fashion so as to maximize a trade-off among SU cell throughput, interference caused to PUs, and mutual SU interference.


To tailor the aggregation tree to the interference pattern of the network, we design an {\em agglomerative clustering} algorithm \cite[Ch. 14]{friedman:01}. We measure the end-to-end performance in terms of the trade-off among SU cell throughput, interference to PUs, and the cost efficiency of aggregation.
We show numerically that our design achieves a small degradation in SU cell throughput (up to 15\% under a reference interference-to-noise ratio of 0dB experienced at PUs) compared to a scheme with full network state information,
while incurring only one-third of the cost in the aggregation of spectrum estimates across the network.
We show that the proposed interference-matched tree design
based on agglomerative clustering significantly outperforms a random tree design, thus demonstrating that it provides more relevant information for network control.
Finally, we compare our proposed design with the state-of-the-art consensus-based algorithm \cite{li:TVT10},
originally designed for single-cell systems without temporal dynamics in the PU spectrum occupancy,
and we demonstrate the superiority of our scheme thanks to its ability to
leverage the spatial and temporal dynamics of interference in the network
and to provide more meaningful information for network control.




\emph{Related work:} Consensus-based schemes for spectrum estimation have been proposed in~\cite{li:TVT10,zeng:JSTSP11,Wu2012,Hajihoseini}:
\cite{Wu2012} proposes a mechanism to 
select only the SUs with the best detection performance to reduce the overhead of spectrum sensing; while \cite{Hajihoseini} focuses on the design of \emph{diffusion} methods.
Cooperative schemes with data fusion have been proposed in  \cite{Letaief,Ding,Ejaz}:
 \cite{Letaief} investigates the optimal voting rule and optimal detection threshold;
 \cite{Ding} proposes a robust scheme to filter out abnormal measurements, such as malicious or unreliable sensors;
 \cite{Ejaz} analyses 
and compares hard and soft combining schemes in heterogeneous networks.
However, all these works focus on a scenario with a single PU pair (one cell) and no temporal dynamics in the PU spectrum occupancy state.  
 Instead, we investigate spectrum sensing in \emph{multi-cell networks} with multiple PU pairs and with \emph{temporal dynamics} in the PU occupancy state, giving rise to both spatial and temporal spectrum access opportunities. Similar opportunities have been explored in \cite{Guoru}, but in the context of a single PU, and without consideration of SU scheduling decisions. In contrast, in our paper we investigate the impact of spectrum sensing on scheduling decisions of SUs.
 
 Another important difference with respect to \cite{li:TVT10,zeng:JSTSP11,Wu2012,Hajihoseini,Letaief} (with the exception of \cite{Guoru}) is that we model temporal dynamics in the occupancy states of each PU, as a result of PUs joining and leaving the network at random times; in time-varying settings, the performance of spectrum estimation may be severely affected by delays in the propagation of estimates across the network, so that spectrum estimates may become outdated. We develop a hierarchical estimation approach that compensates for these propagation delays. A setting with temporal dynamics has been proposed in \cite{myTCNC,Guoru}
for a single-cell system, but without consideration of delays.

Finally, \cite{Bazerque} capitalizes on sparsity due to the narrow-band frequency use, and to sparsely located active radios, and develops estimators to enable
 identification of the (un)used frequency bands at arbitrary locations; differently from this work, we develop techniques to track the activity of PUs, and use this information to schedule transmissions of SUs, hence we investigate the interplay between estimation and scheduling tasks, and the role of network state information.

We summarize the contributions of this paper as follows:
\begin{enumerate}
\item We propose a hierarchical framework to aggregate network state information (NSI) over a multi-cell wireless network,
with a generic interference pattern among cells, which enables spectrum estimation at multiple spatial scales, most informative to network control. We study its performance in terms of the trade-off between the SU cell throughput and the interference
caused to the PUs. We design the optimal SU traffic in each cell in a decentralized fashion, so as to manage the interference caused to other PUs and SUs.
\item  We show that the belief of the spectrum occupancy vector is statistically independent across subsets of cells at different spatial scales, and uniform within each subset (Theorem \ref{thm1}), up to a correction factor that accounts for mismatches in the aggregation delays. This result greatly facilitates the estimation of the interference caused to PUs
 (Lemma \ref{lem:exprew}).
\item We address the design of the hierarchical aggregation tree under a constraint on the aggregation cost based on agglomerative clustering \cite[Ch. 14]{friedman:01}  (Algorithm \ref{alg:clustering}).
\end{enumerate}
\emph{Our analysis demonstrates that multi-scale spectrum estimation
using hierarchical aggregation matched to the structure of interference is a much more cost-effective solution than fine-grained network state estimation, and provides more valuable information for network control.
Additionally, it demonstrates the importance of leveraging
 the spatial and temporal dynamics of interference arising in dense multi-cell systems,
 made possible by our multi-scale strategy; in contrast,
consensus-based strategies, which average out the spectrum estimate over multiple cells and over time, 
are unable to achieve this goal and perform poorly in dense multi-cell systems.}

This paper is organized as follows. 
In Sec. \ref{sysmo}, we present the system model.
In Sec. \ref{sec:est}, we present the proposed local and multi-scale estimation algorithms, whose performance is 
analyzed in Sec. \ref{analysis}.
In Sec. \ref{treedesign}, we address the tree design.
In Sec. \ref{numres}, we present numerical results and, in Sec. \ref{conclu}, we conclude this paper.
The main proofs are provided in the Appendix.
Table I provides the main parameters and metrics.

\begin{table*}\begin{center}
\small
\label{tabnot}
{
\begin{tabular}{| c | l | c | l |}
\hline
$\mathcal C$ & set of cells, with $|\mathcal C|=N_C$
&
$b_{i,t}$& occupancy state of cell $i$ at time $t$, ${\in}\{0,1\}$
\\
$\pi_B$ & steady-state distribution $\mathbb P(b_{i,t}=1)$
&
$\mu$ & memory of the Markov chain $\{b_{i,t},t\geq 0\}$
\\
$\phi_{i,j}$ & INR
 generated by tx in cell $i$ to rx in $j$, cf. \eqref{pathloss}
&
$a_{i,t}$ & SU traffic in cell $i$ at time $t$, ${\in}[0,M_{i,t}]$
\\
$M_{i,t}$ & \# of SUs in cell $i$ at time $t$
&
$\mathcal B_{N}(p)$ & Binomial with $N$ trials and probability $p$
\\
\hline\hline
$\mathcal H^{(L)}$ &
level-$L$ cluster heads
 &
 $\mathcal H_{m}^{(L)}$
 & level-$L$ cluster heads associated to  $m{\in}\mathcal H^{(L+1)}$
\\  
$\mathcal C_{k}^{(L)}$ &
cells associated to $k\in\mathcal H^{(L)}$
&
$\Lambda_{i,j}$ &
h-distance between cells $i$ and $j$, cf. Def. \ref{hdist}
\\
&&
$\mathcal D_i^{(L)}$&
 cells at h-distance $L$  from cell $i$, cf. Def. \ref{def2}
\\
\hline\hline
$\delta_{i}^{(L)}$ & delay from cell $i$ to level-$L$ cluster head
&
$\Delta_{m}^{(L)}$ & delay between 
 $m{\in}\mathcal H_{n}^{(L-1)}$ 
 \\
&&& and its upper level-$L$ cluster head $n$
 \\
\hline\hline
$\hat b_{i,t}$ & local estimate at cell $i$
 &
$\sigma_{i,t}^{(L)}$&
delay mismatched aggregate estimate at
 \\
& 
&
&
h-distance $L$ from cell $i$, cf. \eqref{sigmadef}
\\
\hline\hline
$\hat r_{i,t}$ & SU cell $i$ throughput lower bound, cf. \eqref{rtt}
&
$I_{S,i}(t)$ & estimated SU interference at cell $i$, cf. \eqref{Ts}
\\
$\iota_{P,i}$ & INR caused by SUs in cell $i$, cf. \eqref{locrew2}-\eqref{locrew3}
&$I_{P,i}$ & estimated PU interference at cell $i$, cf. \eqref{XXX}
\\
$u_{i,t}$& utility function, cf. \eqref{utility}
&$\pi_{i,t}$ & local belief in cell $i$
\\\hline
\end{tabular}
\caption{Table of Notation}}
\end{center}
\end{table*}
 
\section{System Model}
 \label{sysmo}
 \underline{\bf Network Model:}
We consider the network depicted in Fig.~\ref{fig:sysmo}, composed of a multi-cell network of PUs with $N_C$ cells operating in downlink,
indexed by  $\mathcal C{\equiv}\{1,2,\dots, N_C\}$,
 and an unlicensed network of SUs.
 The receivers are located in the same cell as their transmitters,
 so that they receive from the closest access point. 
Transmissions are slotted and occur over frames. 
Let $t$ be the frame index, and $b_{i,t}{\in} \{0,1\}$ be the PU spectrum occupancy of cell $i{\in}\mathcal C$ during frame $t$,
with $b_{i,t}{=}1$ if occupied and $b_{i,t}{=}0$ otherwise. 
We suppose that $\{b_{i,t},t\geq 0,i\in\mathcal C\}$  are independent and identically distributed (i.i.d.) across cells and evolve according to a two-state Markov chain, as a result of PUs joining and leaving the network at random times.
We define the transition probabilities as
\\\centerline{$
\nu_1\triangleq\mathbb P(b_{i,t+1}{=}1|b_{i,t}{=}0),
\
\nu_0\triangleq\mathbb P(b_{i,t+1}{=}0|b_{i,t}{=}1),
$}\\
where $\mu{\triangleq}1{-}\nu_1{-}\nu_0$
is the \emph{memory} of the Markov chain, 
 which dictates the rate of convergence to its steady-state distribution. 
 Hence, $\pi_B{\triangleq}\mathbb P(b_{i,t}=1){=}\frac{\nu_1}{1{-}\mu}$ 
at steady-state. 
We denote the state of the network at time $t$ as $\mathbf b_t=(b_{1,t},b_{2,t},\dots,b_{N_C,t})$.
\\\indent We assume that PUs and SUs coexist in the same spectrum band. Let $M_{i,t}$ be
 the number of SUs in cell $i$ at time $t$, which may vary over time as a result of
SUs joining and leaving the network. We collect $M_{i,t}$ in the vector $\mathbf M_t$.
\begin{assumption}
 \label{assum1}
$\{\mathbf M_{t}{,}t{\geq}0\}$ are i.i.d. across cells, stationary and independent of $\{\mathbf b_{t}{,}t{\geq}0\}$, that is
\begin{align*}
 &\mathbb P\bigl(
\mathbf b_{t}{=}\tilde{\mathbf b}_{t},
\mathbf M_{t}{=}\tilde {\mathbf M}_{t},\forall t\in\mathcal T
 \bigr)
 =
 \prod_{i}
 \mathbb P\bigl(
b_{i,t}{=}\tilde b_{i,t}, t{\in}\mathcal T
 \bigr)
\\&\qquad\times
  \mathbb P\bigl(
M_{i,t}{=}\tilde M_{i,t}, t{\in}\mathcal T
 \bigr)\ \text{ (independence),}
 \\
&
  \mathbb P\bigl(
M_{i,t}{=}\tilde M_{i,t}, t{\in}\mathcal T
 \bigr)
 {=}\mathbb P\bigl(
M_{i,t-\delta}{=}\tilde M_{i,t}, t{\in}\mathcal T
 \bigr)\ \text{ (stationarity).}
\end{align*}
 where $\mathcal T$ is a time interval
 and $\delta>0$ is a delay.
 Additionally, $M_{i,t}>0, \forall i,t$ (dense network).\qed
\end{assumption}
 Assumption \ref{assum1} guarantees that spectrum estimates are ``statistically symmetric''  \cite{MicheTSP1}, \emph{i.e.}, they exhibit the same statistical
 properties at different cells and delay scales. An example which obeys  Assumption \ref{assum1} is when $M_{i,t}$ is a Markov chain taking values from $M_{i,t}>0$, i.i.d. across cells.
The SUs opportunistically access the spectrum to maximize their own cell throughput,
while at the same time limiting the interference caused to other SUs and to the PUs.
Their access decision is governed by the \emph{local SU access traffic} $a_{i,t}{\in}[0,M_{i,t}]$ for SUs in cell $i$.
We assume an uncoordinated SU access strategy so that,
given $a_{i,t}$, all the $M_{i,t}$ SUs in cell $i$ access the channel with probability $a_{i,t}/M_{i,t}$, independently of each other.\footnote{We assume that $M_{i,t}$ is known in cell $i$, and a local control channel is available to regulate the local SU traffic $a_{i,t}$.}
Therefore, $a_{i,t}$ represents the expected number of SU transmissions in cell $i$.
 We let $\mathbf a_t{=}(a_{1,t},a_{2,t},\dots,a_{N_C,t})$.
\\
\indent Transmissions of SUs and PUs generate interference to each other.
We denote the interference to noise ratio (INR) 
 generated by the activity of 
a transmitter in cell $i$ to a receiver in $j$ as $\phi_{i,j}{\geq}0$,
 collected into the symmetric (due to channel reciprocity) matrix  $\boldsymbol{\Phi}{\in} \mathbb{R}^{N_C \times N_C}$.
 Typically, 
 \begin{align}
 \label{pathloss}
& [\phi_{i,j}]_{\mathrm{dB}}=[P_{tx}]_{\mathrm{dBm}}-[N_0W_{tot}]_{\mathrm{dBm}}
  \nonumber\\&\qquad
 -[L_{ref}]_{\mathrm{dB}}-\alpha_{i,j}[d_{i,j}/d_{ref}]_{\mathrm{dB}}
 \end{align}
 (see, e.g., \cite{Sun}), where
 $P_{tx}$ is the transmission power, common to all PUs and SUs,
 $N_0$ is the noise power spectral density and $W_{tot}$ is the signal bandwidth;
 $L_{ref}$ is the large-scale pathloss at a reference distance $d_{ref}$, based on Friis' free space pathloss formula, and $[d_{i,j}/d_{ref}]^{\alpha_{i,j}}$ is the distance dependent component, with $d_{i,j}$ and  $\alpha_{i,j}$ the distance and pathloss exponent between cells $i$ and $j$.
 We assume that the intended receiver of each PU or SU transmission is located within the cell radius, so that $\phi_{i,i}$ is the SNR  to the intended receiver in cell $i$.
 In practice, the large-scale pathloss exhibits variations as transmitter or receiver are moved within the cell coverage. Thus, $\phi_{i,j}$ can be interpreted as an average of
  these pathloss variations, or a low resolution approximation of the large-scale pathloss map.
 This is a good approximation due to the small cell sizes arising in dense cell deployments,  as considered in this paper. In Sec. \ref{numres} (Fig. \ref{fig:simresreal}), we will 
 demonstrate its robustness in a more realistic setting.
\\\indent\underline{\bf Network Performance Metrics:}
{We label each SU as $(j,n)$, denoting the $n$th SU in cell $j$.}
{Let $v_{j,n,t}\in\{0,1\}$ be the indicator of whether SU $(j,n)$ transmits
based on the probabilistic access decision outlined above; this is stacked in the vector $\mathbf v_t$.
If the reference SU $(i,1)$ transmits,} the signal received by the corresponding SU receiver is
\begin{align}
&y_{i,1}(t)=\sqrt{\phi_{i,i}}h_{i,1}^{(s)}(t) x_{i,1}^{(s)}(t)
+{w_{i,1}(t)}+ n_{i,1}(t),
\label{signalmodel}
\end{align}
where we have defined the interference signal
\begin{align}
w_{i,1}(t)\triangleq&
\sum_{(j,n)\neq (i,1)}\sqrt{\phi_{j,i}}h_{j,n}^{(s)}(t)v_{j,n,t} x_{j,n}^{(s)}(t)
  \nonumber\\&
+\sum_{j=1}^{N_C}\sqrt{\phi_{j,i}}h_{j}^{(p)}(t)b_{j,t} x_{j}^{(p)}(t),
\end{align}
$h_{j,n}^{(s)}(t)$ is the fading channel between
SU $(j,n)$ and the reference SU $(i,1)$, with $ x_{j,n}^{(s)}(t)$ the unit energy transmitted signal;
$h_{j}^{(p)}(t)$ is the fading channel between
PU $j$ (transmitting in downlink) and SU $(i,1)$, with $ x_{j}^{(p)}(t)$ the  unit energy transmitted signal; 
$\phi_{j,i}$ is the large-scale pathloss between cells $j$ and $i$, see \eqref{pathloss};
$n_{i,1}(t){\sim}\mathcal{CN}(0,1)$ is circular Gaussian noise;
 we assume Rayleigh fading, so that 
 $h_{j,n}^{(s)}(t),h_{j}^{(p)}(t)\sim\mathcal{CN}(0,1)$. The transmission is successful if and only if the SINR exceeds a threshold $\mathrm{SINR}_{\mathrm{th}}$;
 we then obtain the success probability of SU $(i,1)$,
 conditional on $\mathbf v_t$ and $\mathbf b_t$,
 \begin{align}
&\rho_{i,1}(\mathbf v_t,\mathbf b_t)=
\mathbb P\left(\left.
\frac{\phi_{i,i}|h_{i,1}^{(s)}(t)|^2}{1+\left|w_{i,1}(t)\right|^2}
>\mathrm{SINR}_{\mathrm{th}}
\right|\mathbf v_t,\mathbf b_t\right).
\end{align}
Noting that
$w_{i,1}(t)|(\mathbf v_t,\mathbf b_t)$ is circular Gaussian with zero mean and variance
\begin{align}
\mathbb E[|w_{i,1}(t)|^2|\mathbf v_t,\mathbf b_t]\triangleq
\sum_{j=1}^{N_C}\phi_{j,i}\eta_j+\sum_{j=1}^{N_C}\phi_{j,i}b_{j,t}-\phi_{i,i},
\end{align}
where $\eta_j\triangleq\sum_{n=1}^{M_{j,t}}v_{j,n,t}$ is the number of SUs that attempt
spectrum access in cell $j$,
we obtain
 \begin{align*}
&\rho_i(\mathbf v_t,\mathbf b_t){=}
\frac{
e^{-\mathrm{SINR}_{\mathrm{th}}/\phi_{i,i}}
}{1+\mathrm{SINR}_{\mathrm{th}}
\left[
\sum_{j=1}^{N_C}\frac{\phi_{j,i}}{\phi_{i,i}}\eta_j
+\sum_{j=1}^{N_C}\frac{\phi_{j,i}}{\phi_{i,i}}b_{j,t}-1
\right].}
\end{align*}
Then, the throughput in cell $i$, conditional on the SU traffic $\mathbf a_t$ and PU network state $\mathbf b_t$,
 is obtained by 
 noting that  each of the $\eta_i$ SUs succeed with probability $\rho_i$;
 hence, taking the expectation with respect to the number of
SUs performing spectrum access, $\eta_j{\sim}\mathcal B_{M_{j,t}}(a_{j,t}/M_{j,t})$
(binomial random variable with probability $a_{j,t}/M_{j,t}$ and $M_{j,t}$ trials), we obtain
\begin{align}
\label{riatbt}
&r_{i,t}(\mathbf a_t,\mathbf b_t)
\\&\nonumber
{\triangleq}
\mathbb E_{\eta}\left[
\left.
\frac{
\eta_i\exp\left\{-\frac{1}{\phi_{i,i}}\mathrm{SINR}_{\mathrm{th}}\right\}
}{1{+}\mathrm{SINR}_{\mathrm{th}}
\left[
\sum_{j=1}^{N_C}\frac{\phi_{j,i}}{\phi_{i,i}}\eta_j{-}1
{+}\sum_{j=1}^{N_C}\frac{\phi_{j,i}}{\phi_{i,i}}b_{j,t}
\right]
}
\right|\mathbf a_t,\mathbf b_t
\right]
\\&
\nonumber
{=}
\mathbb E_{\eta,\hat\eta_i}\left[
\left.
\frac{a_{i,t}\exp\left\{-\frac{1}{\phi_{i,i}}\mathrm{SINR}_{\mathrm{th}}\right\}
}{1{+}\mathrm{SINR}_{\mathrm{th}}
\left[
\hat\eta_i{+}\sum_{j\neq i}\frac{\phi_{j,i}}{\phi_{i,i}}\eta_j
{+}\sum_{j=1}^{N_C}\frac{\phi_{j,i}}{\phi_{i,i}}b_{j,t}
\right]
}
\right|\mathbf a_t,\mathbf b_t
\right],
\end{align}
where the second equality is obtained by the change of variable $\hat\eta_i{=}\eta_i{-}1$, with $\hat\eta_i{\sim}\mathcal B(a_{i,t}/M_{i,t},M_{i,t}{-}1)$.
 The computation of the SU cell throughput using this formula has high complexity, due to the outer expectation.
 Therefore, we resort to a lower bound.
Noting that the argument of the expectation is a convex function of $\eta_j,\forall j$ and $\hat\eta_i$, Jensen's inequality yields
\begin{align*}
r_{i,t}(\mathbf a_t,\mathbf b_t)
{\geq}&\frac{a_{i,t}\exp\left\{-\frac{1}{\phi_{i,i}}\mathrm{SINR}_{\mathrm{th}}\right\}
}{1{+}\mathrm{SINR}_{\mathrm{th}}\sum_{j=1}^{N_C}\frac{\phi_{j,i}}{\phi_{i,i}}(a_{j,t}+b_{j,t})
{-}\mathrm{SINR}_{\mathrm{th}}\frac{a_{i,t}}{M_{i,t}}
}.
\end{align*}
 Cell $i$ selects $a_{i,t}$ based on partial NSI, denoted by the local belief $\pi_{i,t}(\mathbf b)$ 
that $\mathbf b_t{=}\mathbf b$.
Taking the expectation over $\mathbf b_t$
conditional on $\pi_{i,t}$ and using Jensen's inequality, we obtain
\begin{align}
\label{rtt}
&\mathbb E\left[r_{i,t}(\mathbf a_t,\mathbf b_t)|\pi_{i,t}\right]
\nonumber\\&
\geq\frac{a_{i,t}\exp\left\{-\frac{1}{\phi_{i,i}}\mathrm{SINR}_{\mathrm{th}}\right\}
}{1+\mathrm{SINR}_{\mathrm{th}}
\left[a_{i,t}(1-M_{i,t}^{-1})+I_{P,i}(\pi_{i,t})+I_{S,i}(t)\right]
}
\nonumber\\&
\triangleq\hat r_{i,t}(a_{i,t},I_{P,i}(\pi_{i,t})),
\end{align} 
where we have defined
\begin{align}
\label{Ts}
&I_{S,i}(t)\triangleq\sum_{j\neq i}\frac{\phi_{j,i}}{\phi_{i,i}}a_{j,t},
\\&
\label{XXX}
I_{P,i}(\pi_{i,t})
{\triangleq }
\mathbb E\left[
\left.\sum_{j=1}^{N_C}\frac{\phi_{j,i}}{\phi_{i,i}}b_{j,t}\right|\pi_{i,t}
\right]
{=}
\sum_{j=1}^{N_C}\frac{\phi_{j,i}}{\phi_{i,i}}\mathbb P\left(b_{j,t}{=}1|\pi_{i,t}\right).
\end{align}
The terms $I_{S,i}(t)$ and $I_{P,i}(\pi_{i,t})$ represent, respectively, an estimate of the interference strength caused by SUs and PUs operating in the rest of the network to the reference SU in cell $i$.
Additionally, due to channel reciprocity and the resulting symmetry on $\boldsymbol{\Phi}$, $I_{P,i}(\pi_{i,t})$ represents an estimate of the interference strength caused by the reference SU to the rest of the PU network.

Herein, we use $\hat r_{i,t}$ in \eqref{rtt} to characterize the performance of the SUs. Since this is a lower bound to the actual SU cell throughput,
using $\hat r_{i,t}$ as a metric provides performance guarantees. Note that the performance depends upon the network-wide SU activity $\mathbf a_t$ via $I_{S,i}(t)$; in turn, each $a_{j,t}$
is decided based on the local belief $\pi_{j,t}$, which may be unknown to the SUs in cell $i$ (which operate under a different belief $\pi_{i,t}$).
Therefore, maximization of $\hat r_{i,t}(a_{i,t},I_{P,i}(\pi_{i,t}))$ can be characterized as a \emph{decentralized decision} problem, which 
does not admit polynomial time algorithms \cite{Bernstein}. To achieve low computational complexity, we relax the decentralized decision process by assuming that 
 $I_{S,i}(t)$ is known to cell $i$ in slot $t$. This assumption is based on the
 following practical arguments:
due to the Markov chain dynamics of $\mathbf b_t$, $\mathbf a_t$ varies slowly over time, hence  $I_{S,i}(t)$ can be estimated by averaging the SU traffic over time;
additionally, the spatial variations of $\mathbf a_t$ are averaged out in the spatial domain since 
$I_{S,i}(t)$ is a weighted sum of $a_{j,t}$ across cells, yielding slow variations on 
$I_{S,i}(t)$ due to mean-field effects.
In Sec. \ref{analysis}, we will present an approach to estimate $I_{S,i}(t)$ and $I_{P,i}(\pi_{i,t})$ based on hierarchical information exchange over the SU network.

We define
the average INR experienced by the PUs as a result of the activity of the SUs as
\begin{align}
\label{INR}
\mathrm{INR}(\mathbf a_t,\mathbf b_t)\triangleq\frac{1}{N_C\pi_B}\sum_{j=1}^{N_C}\sum_{i=1}^{N_C}a_{i,t}\phi_{i,j}b_{j,t},
\end{align}
where $N_C\pi_B$ is the average number of  active PUs at steady-state.
In fact, the expected number of SUs transmitting in cell $i$ is $a_{i,t}$, so that 
$a_{i,t}\phi_{i,j}$ is 
the overall interference caused by SUs in cell $i$ to the PU in cell $j$.
$\mathrm{INR}(\mathbf a_t,\mathbf b_t)$ is then obtained by averaging this effect over the  network.
Herein, we isolate the contribution due to the SUs in cell $i$ on
\eqref{INR}, yielding
\begin{align}
&\iota_{P,i}(a_{i,t},\mathbf b_t)\triangleq \frac{1}{\pi_B}a_{i,t}\sum_{j=1}^{N_C} \phi_{i,j}b_{j,t},
\label{locrew2}
\end{align}
so that $\mathrm{INR}(\mathbf a_t,\mathbf b_t)\triangleq\frac{1}{N_C}\sum_{i=1}^{N_C}
\iota_{P,i}(a_{i,t},\mathbf b_t)$.
By computing the expectation with respect to the local belief $\pi_{i,t}$ and using the symmetry of $\boldsymbol{\Phi}$, we then obtain
\begin{align}
\iota_{P,i}(a_{i,t},I_{P,i}(\pi_{i,t}))&\triangleq
\mathbb E[\iota_{P,i}(a_{i,t},\mathbf b_t)|\pi_{i,t}]
\nonumber\\&
=
 \frac{1}{\pi_B}a_{i,t}\phi_{i,i}I_{P,i}(\pi_{i,t}).
\label{locrew3}
\end{align}
\indent Since the goal of SUs is to maximize their own cell throughput, while minimizing their interference to the PUs, 
we define the local utility as a \emph{payoff minus cost} function,
\begin{align}
\label{utility}
&u_{i,t}(a_{i,t},I_{P,i}(\pi_{i,t}))
\nonumber\\&
\triangleq\hat r_{i,t}(a_{i,t},I_{P,i}(\pi_{i,t}))-\lambda\iota_{P,i}(a_{i,t},I_{P,i}(\pi_{i,t})),
\end{align} 
where $\lambda{>}0$ is a cost parameter which balances the two competing goals.
Given $\pi_{i,t}$, the goal of the SUs in cell $i$ is to design $a_{i,t}$ so as to maximize
$u_{i,t}(a_{i,t},I_{P,i}(\pi_{i,t}))$. Since this is a concave function of $a_{i,t}$ (as can be seen by inspection), we obtain
the optimal SU traffic
\begin{align}
\label{rewardinb}
\nonumber
&a_{i,t}^*(I_{P,i}(\pi_{i,t}))\triangleq\underset{a_{i,t}\in[0,M_{i,t}]}{\arg\max}u_{i,t}(a_{i,t},I_{P,i}(\pi_{i,t}))
\nonumber\\&
=
\Biggl[\frac{
\sqrt{1+\mathrm{SINR}_{\mathrm{th}}\left[I_{P,i}(\pi_{i,t})+I_{S,i}(t)\right]}
}{\mathrm{SINR}_{\mathrm{th}}(1-1/M_{i,t})}
\\&{\times}
\Biggl(
\frac{\sqrt{\pi_B}e^{-\frac{\mathrm{SINR}_{\mathrm{th}}}{2\phi_{i,i}}}}{\sqrt{\lambda\phi_{i,i}I_{P,i}(\pi_{i,t})}}
{-}\sqrt{1{+}\mathrm{SINR}_{\mathrm{th}}\left[I_{P,i}(\pi_{i,t}){+}I_{S,i}(t)\right]}
\Biggr)\Biggr]_{0}^{M_{i,t}},
\nonumber
\end{align}
where $[\cdot]_0^m=\min\{\max\{\cdot,0\},m\}$ denotes the projection operation 
onto the interval $[0,m]$. It can be shown by inspection that
 both $a_{i,t}^*$ and $u_{i,t}^*$ are non-increasing functions of $I_{P,i}(\pi_{i,t})$, so that, as the PU activity increases ($I_{P,i}(\pi_{i,t})$ increases), the SU activity and the local utility both decrease;
 when $I_{P,i}(\pi_{i,t})$ is above a certain threshold,
 then $a_{i,t}^*=0$ and $u_{i,t}^*(I_{P,i}(\pi_{i,t}))=0$; indeed, in this case the PU network experiences high activity, hence SUs remain idle to
avoid interfering.
Additionally,  $u_{i,t}^*(I_{P,i}(\pi_{i,t}))$ is a convex function of 
$I_{P,i}(\pi_{i,t})$. Then, by Jensen's inequality,
\begin{align}
\label{upbound}
u_{i,t}^*(I_{P,i}(\pi_{i,t}))\leq\sum_{\mathbf b\in\{0,1\}^{N_C}}
\pi_{i,t}(\mathbf b)
u_{i,t}^*(I_{P,i}(\mathcal I_{\mathbf b})),
\end{align}
where $\mathcal I_{\mathbf b}$ is the Kronecker delta function centered at $\mathbf b$,
reflecting the special case when $\mathbf b_t$ is known, so that
$u_{i,t}^*(I_{P,i}(\mathcal I_{\mathbf b}))$ represents the utility achieved when 
$\mathbf b_t=\mathbf b$, known.
Consequently, the expected network utility is maximized when $\mathbf b_t$ is known
 (full NSI).
Thus, the SUs should, possibly, obtain full NSI in order to achieve the best performance.
To approach this goal,
the SUs in cell $i$ should obtain $\mathbf b_t$ in a timely fashion.
To this end, the SUs in cell $j{\neq}i$ should report the local
and current spectrum state $b_{j,t}$ to the SUs in cell $i$ via information exchange, potentially over multiple hops.
Since this needs to be done over the entire network (\emph{i.e.}, for every pair $(i,j)\in\mathcal C^2$), the associated overhead may be impractical in dense multi-cell network deployments. 
Additionally, these spectrum estimates may be noisy and delayed, hence they may become outdated and not informative for network control.
In order to reduce the overhead of full-NSI, we now develop a scheme to estimate spectrum occupancy based on \emph{delayed, noisy, and aggregate} (vs
timely, noise-free and fine-grained)
spectrum measurements  over the network.

\section{Local and Multi-scale Estimation Algorithms}
\label{sec:est}
In this section, we propose a method to estimate $I_{P,i}(\pi_{i,t})$ and $I_{S,i}(t)$
at cell $i$ based on hierarchical information exchange. To this end, SUs exchange estimates of the local PU spectrum occupancy $b_{i,t}$, denoted as $\hat b_{i,t}$,
as well as the local SU traffic decision variable $a_{i,t}$.
For conciseness, we will focus on the estimation of $I_{P,i}(\pi_{i,t})$ in this section; however,
the same technique can be applied straightforwardly to the estimation of $I_{S,i}(t)$ as well.
 In fact, $I_{P,i}(\pi_{i,t})$ and $I_{S,i}(t)$ have the same structure -- they both are 
a weighted sum of the respective local variables $\mathbb E[b_{i,t}|\pi_{i,t}]$ and  $a_{i,t}$, with weights $\frac{\phi_{j,i}}{\phi_{i,i}}$, see \eqref{Ts}-\eqref{XXX},
hence they can be similarly estimated.
\subsection{Aggregation tree}
To reduce the cost of acquisition of NSI, we propose a  
{\em multi-scale} approach to spectrum sensing.
To this end, we partition the cell grid into $P$ sets
$\mathcal C_p,p{=}1,\dots,P$, 
and define a tree on each $\mathcal C_p$, designed in Sec. \ref{treedesign}.
  Since each edge in the tree incurs delay, $P$ disconnected trees are equivalent to a single tree where the 
edges connecting each of the $P$ subtrees to the root have \emph{infinite} delay (and thus, provide outdated, non-informative NSI). 
Hence, without loss of generality, we assume $P{=}1$ where, possibly, some edges incur infinite delay.

Level-$0$ contains the leaves, represented by the cells $\mathcal C$.
To each cell, we associate the singleton set $\mathcal C_{i}^{(0)}{\equiv}\{i\},i{\in}\mathcal C$.\footnote{Note that
$\mathcal C_{i}^{(0)}$ represents cell $i$, containing $M_{i,t}{>}0$ SUs (Assumption~\ref{assum1}).}
At level-$1$, let $\mathcal C_{k}^{(1)},1{\leq}k{\leq}n^{(1)}$ be a partition of $\mathcal C$ into $n^{(1)}{\leq}|\mathcal C|$ non-empty subsets,
each associated to a cluster head $k$.
The set of $n^{(1)}$ level-$1$ cluster heads is denoted as $\mathcal H^{(1)}$.
Hence, $\mathcal C_{k}^{(1)}$  is the set of cells associated to the level-1 cluster head $k{\in}\mathcal H^{(1)}$,
see Fig. \ref{fig:sysmo}.

Recursively, at level-$L$, let $\mathcal H^{(L)}$ be the set of  level-$L$ cluster heads,
 with $L{\geq}1$. If  $|\mathcal H^{(L)}|{=}1$, then 
 we have defined a tree with depth $D{=}L$.
 Otherwise, we define a partition of $\mathcal H^{(L)}$ into
$n^{(L{+}1)}{\leq}|\mathcal H^{(L)}|$ non-empty subsets $\mathcal H_{m}^{(L)},m{=}1,\dots, n^{(L{+}1)}$, each associated to a level-$(L{+}1)$ cluster head, collected in the set
 $\mathcal H^{(L{+}1)}{\equiv}\{1,\dots,n^{(L{+}1)}\}$.
Let $\mathcal C_{m}^{(L{+}1)}$ be the set of cells associated to level-$(L{+}1)$ cluster head $m\in\mathcal H^{(L{+}1)}$.
This is obtained recursively as
\begin{align}
\label{recC}
\mathcal C_{m}^{(L+1)}=\bigcup_{k\in\mathcal H_{m}^{(L)}}\mathcal C_{k}^{(L)},\ \forall m\in\mathcal H^{(L+1)}.
\end{align}

We are now ready to state some important definitions.
\begin{definition}
\label{hdist}
We define
the \emph{hierarchical distance} (h-distance) between cells $i,j{\in}\mathcal C$ as
\\
\centerline{$
\hfill\Lambda_{i,j}\triangleq\min\left\{L\geq 0:
{i,j\in\mathcal C_{m}^{(L)},\exists m\in\mathcal H^{(L)}}\right\}.\hfill\qed
$}
\end{definition}
\noindent In other words, $\Lambda_{i,j}$ is the smallest level of the cluster containing both $i$ and $j$.
It follows that 
the  h-distance between cell $i$ and itself is $\Lambda_{i,i}{=}0$, and it is symmetric ($\Lambda_{i,j}{=}\Lambda_{j,i}$).
\begin{definition}
\label{def2}
Let $\mathcal D_i^{(L)}$ be the set of cells at h-distance $L$  from cell $i$: $\mathcal D_i^{(0)}{\equiv}\{i\}$, and, for all $m\in\mathcal H^{(L)}$, $k\in\mathcal H_m^{(L{-}1)}$,
$i\in\mathcal C_{k}^{(L{-}1)}$ (then, $k$ is the level-$(L{-}1)$ cluster head of cell $i$)
\\
\centerline{\hfill$
\mathcal D_i^{(L)}\equiv
\mathcal C_{m}^{(L)}\setminus\mathcal C_{k}^{(L-1)},\ L>0.
$\hfill\qed}
\end{definition}
In fact, $\mathcal C_{m}^{(L)}$ contains all cells at h-distance (from cell $i$) less than (or equal to) $L$.
Thus, we obtain $\mathcal D_i^{(L)}$ by removing from $\mathcal C_{m}^{(L)}$ all cells  at h-distance less than (or equal to)  $L-1$, $\mathcal C_{k}^{(L-1)}$ (note that this is a subset of
$\mathcal C_{m}^{(L)}$, since $k\in\mathcal H_m^{(L-1)}$).
{For example, with reference to Fig. \ref{fig:sysmo},
$\mathcal D_1^{(0)}\equiv\{1\}$ (cell $1$ is at  h-distance $0$ from itself),
$\mathcal D_1^{(1)}\equiv\{2,5,6\}$ (cells $2$, $5$ and $6$ are at h-distance $1$ from cell $1$),
$\mathcal D_1^{(2)}\equiv\{3,4,7,8\}$ (cells $3$, $4$, $7$ and $8$ are at h-distance $2$ from cell $1$).}

\subsection{Local Estimation}
\label{Localestimation}
The first portion of the frame is used by SUs for spectrum sensing, the remaining 
 portion for data communication.
Thus,  spectrum sensing does not suffer from SU interference.
\begin{remark}
This frame structure requires accurate synchronization among SUs, 
achievable using techniques developed in \cite{Rentel}.
Loss of synchronization may cause overlap between the sensing and communication phases; herein, we assume that
the duration of the sensing phase is sufficiently larger than synchronization errors, so that this overlap is negligible.
\end{remark}
In the spectrum sensing portion of frame $t$, $M_{i,t}$ SUs in cell $i$ estimate
 $b_{i,t}$.
Each of the $M_{i,t}$ SUs
observe the local state $b_{i,t}$ through a 
binary asymmetric channel, $\mathrm{BC}(\epsilon_{F},\epsilon_M)$, 
where $\epsilon_F$ is the false-alarm probability ($b_{i,t}{=}0$ is detected as being occupied)
and $\epsilon_M$ is the mis-detection probability ($b_{i,t}{=}1$ is detected as being unused).
In practice, each SU measures the received energy level and compares it to a threshold; the value of this threshold
entails a trade-off between $\epsilon_F$ and  $\epsilon_M$.
We assume that these $M_{i,t}$ spectrum measurements 
are i.i.d. across SUs (given $b_{i,t}$). 
 In principle, $\epsilon_F,\epsilon_M$ may vary over cells and time, but for simplicity we treat them as constant.

Then, these measurements are fused at a local fusion center at the cell level,\footnote{The optimal design of decision threshold, local estimators, fusion rules, are outside the scope of this paper and can be found in other prior work, such as \cite{Letaief,Ding,Ejaz}, for the case of a single-cell.} and then up the hierarchy,
using an out-of-band channel which does not interfere with PUs.
Thus, the number of 
measurements that detect (possibly, with errors) the spectrum as occupied in cell $i$,
denoted as $\xi_{i,t}{\in}\{0,\dots,M_{i,t}\}$, is a sufficient statistic to estimate $b_{i,t}$.
Let
\\
\centerline{$\bar b_{i,t}\triangleq\mathbb P(b_{i,t}=1|\text{past measurements})$}
\\
be the prior probability of occupancy of cell $i$, time $t$, given measurements collected up to $t$ (excluded). After collecting the $M_{i,t}$ measurements, the cell head estimates $b_{i,t}$ as
\begin{align}
\label{post_estimate}
\hat b_{i,t}&\triangleq\mathbb P(b_{i,t}=1|\text{past measurements},\xi_{i,t}=\xi)
\\&
=
\frac{\bar b_{i,t}\mathbb P(\xi_{i,t}=\xi|b_{i,t}=1)}
{
\bar b_{i,t}\mathbb P(\xi_{i,t}=\xi|b_{i,t}=1)
+(1-\bar b_{i,t})\mathbb P(\xi_{i,t}=\xi|b_{i,t}=0)
},
\nonumber
\end{align}
where the second step follows from Bayes' rule.
Note that $[\xi_{i,t}|b_{i,t}{=}1]{\sim}\mathcal B_{M_{j,t}}(1{-}\epsilon_M)$
and $[\xi_{i,t}|b_{i,t}{=}0]{\sim}\mathcal B_{M_{j,t}}(\epsilon_F)$. Thus, we obtain
\begin{align}
\label{hatbit}
\hat b_{i,t}{=}
\frac{\bar b_{i,t}\left(1-\epsilon_M\right)^{\xi}\epsilon_M^{M_{i,t}-\xi}}
{
\bar b_{i,t}\left(1{-}\epsilon_M\right)^{\xi}\epsilon_M^{M_{i,t}{-}\xi}
{+}(1{-}\bar b_{i,t})\epsilon_F^{\xi}\left(1{-}\epsilon_F\right)^{M_{i,t}{-}\xi}
}.
\end{align}
Given $\hat b_{i,t}$,
the prior probability in the next frame is obtained based on the spectrum occupancy dynamics as
\begin{align}
\nonumber
&\bar b_{i,t+1}\triangleq\mathbb P(b_{i,t+1}=1|\text{past measurements},\xi_{i,t}=\xi)
\\&
=(1-\nu_0)\hat b_{i,t}+\nu_1(1-\hat b_{i,t})
=(1-\mu)\pi_B+\mu\hat b_{i,t}.
\end{align}


\subsection{Hierarchical information exchange over the tree}
In the previous section, we discussed the local estimation at the cell level.
 We now describe the {\em hierarchical} fusion of local estimates to collect multi-scale NSI. 
 This fusion is patterned after {\em hierarchical averaging} \cite{nokleby:JSTSP13}, a technique for scalar average consensus in wireless networks.

The aggregation process running at each node is depicted in Fig. \ref{fig:aggalgo}. 
The cell head, after the local spectrum sensing in frame $t$, has a local spectrum estimate $\hat b_{i,t}$.
These local estimates are fused up the hierarchy, incurring delay.
  Let $\delta_i^{(L)}\geq 0$ be the delay to propagate the spectrum estimate of cell $i$ all the way up to its
 level-$L$ cluster head $n$. It includes the local processing time at each intermediate level-$l$ cluster head traversed before reaching the level-$L$ cluster head, as well as the delay to traverse the links (possibly, multi-hop) connecting successive cluster heads.
 We assume that $\delta_i^{(L)}$ is an integer, multiple of the frame duration; in fact, scheduling of SUs transmissions
in the data communication phase is done immediately after spectrum sensing, hence a spectrum estimate with non-integer delay  $\delta_i^{(L)}$ can only be used for scheduling decisions with delay $\lceil \delta_i^{(L)}\rceil$. In the special case when $\delta_i^{(L)}=0$, the estimate of cell $i$ becomes immediately available to the level-$L$ cluster head;
if $\delta_i^{(L)}=1$, it becomes available for data communication in the following frame, and so on.

\begin{figure*}[t]
\centering  
\includegraphics[width=.75\linewidth,trim = 6mm 6mm 0mm 0mm,clip=true]{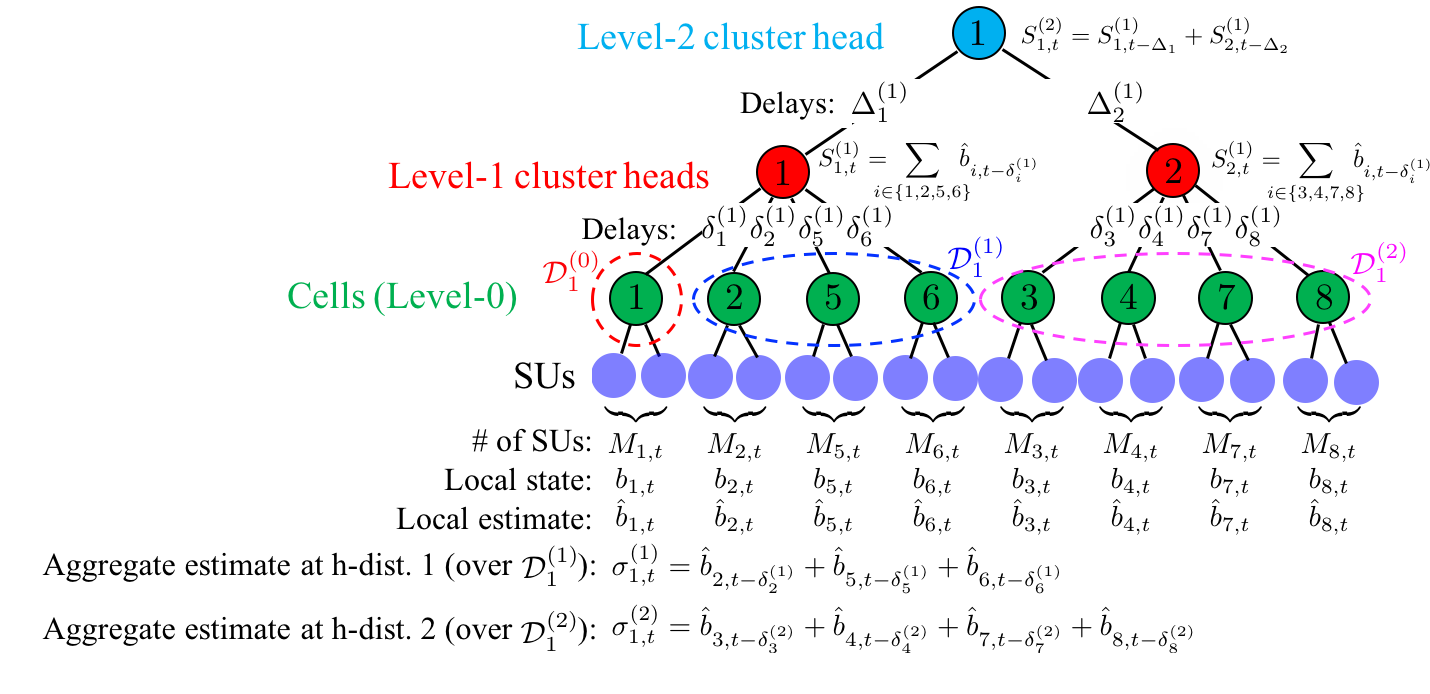}
\caption{Aggregation process referred to Fig. \ref{fig:sysmo}, and aggregate estimates relative to cell $1$.}
\label{fig:aggalgo}
\end{figure*}
 
We assume that $\delta_i^{(L)}{\leq}\delta_i^{(L+1)}$,
 \emph{i.e.}, the delay augments as the local spectrum estimates are aggregated at higher levels. 
 More precisely, let $\Delta_{m}^{(L-1)}$ be the delay between the level-$(L{-}1)$ cluster head $m$, and its level-$L$
 cluster head $n$, with $m\in\mathcal H_{n}^{(L-1)}$. We can thus express 
 $\delta_i^{(L)}$ as
 \begin{align}
 \label{deltaupdate}
 &\delta_i^{(L)}=
  \delta_i^{(L-1)}+\Delta_{h_i}^{(L-1)}
  =
 \sum_{l=1}^{L}\Delta_{h_i}^{(l-1)},
 \end{align}
 where $h_i^{(l)}$ is the level-$l$ cluster head of cell $i$.
By the end of the spectrum sensing phase,
 the level-$1$ cluster head $m{\in}\mathcal H^{(1)}$ receives the spectrum estimates from
its cluster $\mathcal C_{m}^{(1)}$: $\hat b_{i,t-\delta_i^{(1)}}$
 is received from cell $i{\in}\mathcal C_m^{(1)}$ with delay $\delta_i^{(1)}{\geq}0$.
 These are aggregated at the level-$1$ cluster head as
\begin{align}
\label{smtp1}
S_{m,t}^{(1)}
\triangleq\sum_{i\in\mathcal C_{m}^{(1)}}
\hat b_{i,t-\delta_i^{(1)}},\ \forall m\in\mathcal H^{(1)},
\end{align}
each with its own delay.
This process continues up the hierarchy: the level-$L$ cluster head 
$m{\in}\mathcal H^{(L)}$ receives
 $S_{k,t-\delta}^{(L-1)}$ from
the level-$(L{-}1)$ cluster heads $k{\in}\mathcal H_{m}^{(L-1)}$ connected to it, with delay $\Delta_{k}^{(L-1)}$, and aggregates them as
\begin{align}
\label{aggregationL}
S_{m,t}^{(L)}
=
\sum_{k\in\mathcal H_{m}^{(L-1)}}
S_{k,t-\Delta_k}^{(L-1)},
\end{align}
each with its own delay $\Delta_k^{(L-1)}$.
Importantly, these delays may differ from each other,
hence $S_{m,t}^{(L)}$ does not truly reflect the aggregate spectrum at a given time.
For this reason we denote $S_{m,t}^{(L)}$ as the \emph{delay mismatched aggregate spectrum estimate} at level-$L$ cluster head $m$.
The next lemma relates $S_{m,t}^{(L)}$ to the local estimates.
 \begin{lemma}
 \label{Lem:smt}
 Let $m\in\mathcal H^{(L)}$ be a level-$L$ cluster head.
Then,
 \begin{align}
 \label{sdfhb}
S_{m,t}^{(L)}
=
\sum_{j\in\mathcal C_{m}^{(L)}}\hat b_{j,t-\delta_j^{(L)}}.
\end{align}
 \end{lemma}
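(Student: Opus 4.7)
The plan is to prove Lemma \ref{Lem:smt} by induction on the hierarchy level $L$, unwinding the recursive aggregation definition \eqref{aggregationL} and keeping careful track of the accumulated delay. The base case $L=1$ is immediate from the definition \eqref{smtp1}: for $m \in \mathcal H^{(1)}$, we have $S_{m,t}^{(1)} = \sum_{i \in \mathcal C_{m}^{(1)}} \hat b_{i,t-\delta_i^{(1)}}$, which matches \eqref{sdfhb} with $L=1$.

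For the inductive step, I would assume that \eqref{sdfhb} holds at level $L-1$, i.e., for every $k \in \mathcal H^{(L-1)}$ and every time $\tau$,
\begin{equation*}
S_{k,\tau}^{(L-1)} = \sum_{j \in \mathcal C_{k}^{(L-1)}} \hat b_{j,\tau - \delta_j^{(L-1)}}.
\end{equation*}
Fixing $m \in \mathcal H^{(L)}$, I would then apply the recursion \eqref{aggregationL} and substitute the inductive hypothesis with $\tau = t - \Delta_{k}^{(L-1)}$:
\begin{equation*}
S_{m,t}^{(L)} = \sum_{k \in \mathcal H_{m}^{(L-1)}} S_{k,t-\Delta_{k}^{(L-1)}}^{(L-1)} = \sum_{k \in \mathcal H_{m}^{(L-1)}} \sum_{j \in \mathcal C_{k}^{(L-1)}} \hat b_{j,t - \Delta_{k}^{(L-1)} - \delta_j^{(L-1)}}.
\end{equation*}

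The crux of the argument is then to identify the delay $\Delta_{k}^{(L-1)} + \delta_j^{(L-1)}$ with $\delta_j^{(L)}$. For $j \in \mathcal C_{k}^{(L-1)}$, the level-$(L-1)$ cluster head of cell $j$ is precisely $k$, i.e., $h_j^{(L-1)} = k$. The recursion \eqref{deltaupdate} then gives $\delta_j^{(L)} = \delta_j^{(L-1)} + \Delta_{h_j^{(L-1)}}^{(L-1)} = \delta_j^{(L-1)} + \Delta_{k}^{(L-1)}$, exactly as needed. Finally, invoking the partition property \eqref{recC}, $\mathcal C_{m}^{(L)} = \bigcup_{k \in \mathcal H_{m}^{(L-1)}} \mathcal C_{k}^{(L-1)}$ (a disjoint union, since the $\mathcal C_{k}^{(L-1)}$ partition $\mathcal C$ at their level), the double sum collapses to
\begin{equation*}
S_{m,t}^{(L)} = \sum_{j \in \mathcal C_{m}^{(L)}} \hat b_{j,t - \delta_j^{(L)}},
\end{equation*}
completing the induction.

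There is no real obstacle here: the lemma is essentially a bookkeeping statement, and the only subtlety is making sure the delays accumulate correctly. The one step that deserves a clear word in the proof is why $h_j^{(L-1)} = k$ for every $j \in \mathcal C_{k}^{(L-1)}$, so that the delay $\Delta_{k}^{(L-1)}$ attached to the edge from $k$ to its level-$L$ parent is the same $\Delta_{h_j^{(L-1)}}^{(L-1)}$ appearing in \eqref{deltaupdate} for every $j$ in the cluster; this is immediate from the definition of $\mathcal C_{k}^{(L-1)}$ as the set of cells whose level-$(L-1)$ cluster head is $k$.
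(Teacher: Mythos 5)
Your proof is correct and follows essentially the same route as the paper's: induction on $L$, with the base case from \eqref{smtp1}, the inductive step obtained by substituting the hypothesis into \eqref{aggregationL}, the delays combined via \eqref{deltaupdate}, and the double sum collapsed via \eqref{recC}. The only difference is that you make explicit the (correct) observation that $h_j^{(L-1)}=k$ for $j\in\mathcal C_k^{(L-1)}$, which the paper leaves implicit.
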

 \begin{proof}
 See Appendix A.
 \end{proof}

Despite mismatched delays, in Sec. \ref{analysis} we show that cell $i$ 
can compensate them via prediction. 
\begin{remark}
\label{remdec}
Note that the aggregation process runs in a decentralized fashion at each node: level-$L$ cluster head $m$ needs only information about the set of level-$(L{-}1)$ cluster heads connected to it, $k{\in}\mathcal H_{m}^{(L{-}1)}$, 
and the  delays $\Delta_k^{(L{-}1)}$. 
This information is available at each node during tree formation; delays may be estimated using time-stamps associated with the control packets.
{The aggregation process has low complexity: each cluster-head simply 
aggregates the delay mismatched aggregate spectrum estimates from the lower level cluster heads connected to it, and
 transmits this aggregate estimate to its higher level cluster head.}
\end{remark}

Eventually, the aggregate spectrum measurements are fused at the root (level-$D$) as
\begin{align}
S_{1,t}^{(D)}
=\sum_{k\in\mathcal H_{1}^{(D-1)}} 
S_{k,t-\Delta_k}^{(D-1)}
=
\sum_{j\in\mathcal C}\hat b_{j,t-\delta_j^{(D)}},
\end{align}
where we used Lemma \ref{Lem:smt} and  $\mathcal C_{1}^{(D)}{\equiv}\mathcal C$.
Upon reaching level-$D$ and each of the lower levels,
the aggregate spectrum estimates are propagated down to the individual cells $i{\in}\mathcal C$ over the tree.\footnote{We include the propagation delay from the cluster head back to the single cells
in $\delta_i^{(L)}$.}

Therefore, at the beginning of frame $t$, the SUs  in cell $i$ receive
the delay mismatched aggregate spectrum estimates
  from their level-$L$ cluster heads $h_i^{(L)},L=0,\dots,D$,
\begin{align*}
\left\{\begin{array}{rcl}S_{i,t}^{(0)}&=&\hat b_{i,t},\\S_{h_i,t}^{(L)}&=&\sum_{j\in\mathcal C_{h_i}^{(L)}}\hat b_{j,t-\delta_{j}^{(L)}},\ 1\leq L<D,\end{array}\right.\end{align*}
where we remind that 
$\mathcal C_{h_i}^{(L)}$ is the set of cells associated to $h_i^{(L)}$ at level $L$,
and $\delta_j^{(L)}$ is the delay for the
estimate of $b_{j,t}$ to propagate to the level-$L$ cluster head $h_i^{(L)}$.
From this set of measurements,
cell $i$ can compute
the aggregate spectrum estimate of the cells at all h-distances from itself as
\begin{align}
\label{sigmadef}
\left\{
\begin{array}{lcl}
\sigma_{i,t}^{(0)}&\triangleq&S_{h_i,t}^{(0)}=\hat b_{i,t},\\
\sigma_{i,t}^{(L)}&\triangleq& S_{h_i,t}^{(L)}{-}S_{h_i,t-\Delta_{h_i}}^{(L-1)},\ 1{\leq}L{\leq}D.
\end{array}\right.
\end{align}
To interpret $\sigma_{i,t}^{(L)}$
as the aggregate estimate at h-distance $L$ from cell $i$,
note that Lemma \ref{Lem:smt} yields
\\\centerline{$
\sigma_{i,t}^{(L)}=
\sum_{j\in\mathcal C_{h_i}^{(L)}}\hat b_{j,t-\delta_j^{(L)}}
-\sum_{j\in\mathcal C_{h_i}^{(L-1)}}\hat b_{j,t-\delta_j^{(L-1)}-\Delta_{h_i}^{(L-1)}}.
$}\\
Since $i,j{\in}\mathcal C_{h_i}^{(L-1)}$ share
 the same level-$(L{-}1)$ and -$L$ cluster heads, $h_i^{(L-1)}$ and $h_i^{(L)}$,
(\ref{deltaupdate}) yields
\\\centerline{$
\delta_j^{(L-1)}+\Delta_{h_i}^{(L-1)}
=\delta_j^{(L)}.
$}\\
Then,
$\forall\ L=1,2,\dots,D$, using Definition \ref{def2} we obtain
\begin{align}
\label{sigmadel}
&\sigma_{i,t}^{(L)}{=}
\sum_{j\in\mathcal C_{h_i}^{(L)}}\hat b_{j,t-\delta_j^{(L)}}
-\sum_{j\in\mathcal C_{h_i}^{(L-1)}}\hat b_{j,t-\delta_j^{(L)}}
{=}
\sum_{j\in\mathcal D_i^{(L)}}\hat b_{j,t-\delta_j^{(L)}},
\end{align}
so that $\sigma_{i,t}^{(L)}$ represents the \emph{delay mismatched aggregate}
 spectrum estimate of cells at h-distance $L$ from cell $i$ ($j{\in}\mathcal D_i^{(L)}$).
Thus, with this method,
  the SUs  in cell $i$  can compute the \emph{delay mismatched aggregate}
 estimate at multiple scales corresponding to different h-distances, given delayed measurements.
Notably, only aggregate and delayed estimates are available, rather than timely information on the state of each cell.
These are used to update the belief $\pi_{i,t}$ in Sec. \ref{analysis}.

\section{Analysis}
 \label{analysis}
 Given past and current
 delayed spectrum estimates across all h-distances,
 $\boldsymbol{\sigma}_{i,\tau}
{=}
(\sigma_{i,\tau}^{(0)},
\sigma_{i,\tau}^{(1)},\dots,
\sigma_{i,\tau}^{(D)}),$ $\tau{=}0,\dots,t$,
 the form of the local belief $\pi_{i,t}$ is provided in the following theorem.
\begin{thm}
\label{thm1} 
Given $\boldsymbol{\sigma}_{i,\tau},\tau=0,1,\dots,t$,
 we have
\begin{align}
\label{eq1}
&\pi_{i,t}(\mathbf b){=}\prod_{L=0}^{D}\mathbb P\left(b_{j,t}{=}b_j,\forall j\in
\mathcal D_i^{(L)}|
\sigma_{i,\tau}^{(L)},\forall\tau=0,\dots,t
\right),
\end{align}
where, letting
$x=\sum_{j\in\mathcal D_i^{(L)}}b_j$,
\begin{multline}
\label{eq2}
\mathbb P\Bigr(b_{j,t}=b_j,\forall j\in\mathcal D_i^{(L)}\Bigr|
\sigma_{i,\tau}^{(L)},\forall\tau=0,\dots,t
\Bigr)
\\
=
\sum_{x=0}^{|\mathcal D_i^{(L)}|}
\underbrace{
\mathbb P\Bigr(
\sum_{j\in\mathcal D_i^{(L)}}
b_{j,t-\delta_j^{(L)}}
=x\Bigr|
\sigma_{i,\tau}^{(L)},\forall\tau=0,\dots,t
\Bigr)}_{\text{A}}
\\
\times\underbrace{\frac{x!|\mathcal D_i^{(L)}-x|!}{|\mathcal D_i^{(L)}|!}}_{\text{B}}
\sum_{\tilde b_{j},j\in\mathcal D_i^{(L)}}
\underbrace{\chi\Bigr(\sum_{l\in\mathcal D_i^{(L)}}\tilde b_{l}=x\Bigr)}_{\text{C}}
 \\
\times\prod_{l\in\mathcal D_i^{(L)}}
\underbrace{\left[\pi_B{+}\mu^{\delta_l^{(L)}}\left(\tilde b_{l}{-}\pi_B\right)\right]^{b_l}}_{\text{D}}
\underbrace{\left[1{-}\pi_B{-}\mu^{\delta_l^{(L)}}\left(\tilde b_{l}{-}\pi_B\right)\right]^{1{-}b_l}}_{\text{E}},
\end{multline}
where $\chi(\cdot)$ is the indicator function.
Additionally,
\begin{align}
\label{eq2_2}
\mathbb E\Bigr(\sum_{j\in\mathcal D_i^{(L)}}b_{j,t-\delta_j^{(L)}}\Bigr|\sigma_{i,\tau}^{(L)},\forall\tau=0,\dots,t\Bigr)
=\sigma_{i,t}^{(L)}.\end{align}
\hfill
\end{thm}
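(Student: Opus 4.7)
My plan is to prove the three claims \eqref{eq1}, \eqref{eq2}, and \eqref{eq2_2} in sequence, exploiting Assumption~\ref{assum1} (independence across cells and stationarity), the two-state Markov structure of $\{b_{i,t}\}$, and the Bayesian definition of $\hat b_{i,t}$ from Sec.~\ref{Localestimation}.

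To establish \eqref{eq1}, I would first observe that the sets $\{\mathcal D_i^{(L)}\}_{L=0}^{D}$ form a disjoint partition of $\mathcal C$, since at each successive level the cells newly absorbed into cell $i$'s cluster are precisely $\mathcal C_{h_i}^{(L)}\setminus\mathcal C_{h_i}^{(L-1)}=\mathcal D_i^{(L)}$ by Definition~\ref{def2}. By Lemma~\ref{Lem:smt} and \eqref{sigmadel}, $\sigma_{i,\tau}^{(L)}$ depends only on the local estimates, and hence on the local measurements and true spectrum states, of cells in $\mathcal D_i^{(L)}$. Since Assumption~\ref{assum1} makes the processes $\{(b_{j,\cdot},\xi_{j,\cdot})\}$ independent across cells, conditioning on the aggregate sequences at different levels acts independently on disjoint groups of cells, which yields the product factorization \eqref{eq1}.

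For \eqref{eq2}, I would apply the law of total probability at a fixed level $L$ by first conditioning on the sum $X=\sum_{j\in\mathcal D_i^{(L)}}b_{j,t-\delta_j^{(L)}}$ of past (delayed) states, and then on their specific configuration $\{\tilde b_j\}$. This decomposes the posterior into three factors: (i) $\mathbb P(X=x\mid\sigma_{i,\tau}^{(L)},\forall\tau)$, which is term~A; (ii) the conditional distribution of $\{\tilde b_j\}$ given $X=x$ and $\{\sigma_{i,\tau}^{(L)}\}$; and (iii) the Markov prediction from delayed to current states. Factor (iii) factorizes by conditional independence of $\{b_{j,t}\}$ given $\{b_{j,t-\delta_j^{(L)}}\}$, and the two-state identity $\mathbb P(b_{j,t}=1\mid b_{j,t-\delta}=\tilde b)=\pi_B+\mu^\delta(\tilde b-\pi_B)$ yields, upon expansion over $b_l\in\{0,1\}$, the product of terms~D and~E. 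Factor (ii) arises from exchangeability: the family $\{(b_{j,t-\delta_j^{(L)}},\hat b_{j,\tau-\delta_j^{(L)}})_\tau\}_{j\in\mathcal D_i^{(L)}}$ is i.i.d.\ across $j$ by Assumption~\ref{assum1}, and $\sigma_{i,\tau}^{(L)}$ is a symmetric sum over this family; hence given $X=x$ and $\{\sigma_{i,\tau}^{(L)}\}$, the law of $\{\tilde b_j\}$ is uniform over the $\binom{|\mathcal D_i^{(L)}|}{x}$ binary vectors summing to $x$, supplying the combinatorial factor~B and the indicator~C.

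The main obstacle is justifying exchangeability in the presence of heterogeneous per-cell delays $\delta_j^{(L)}$, which on the surface breaks symmetry. The resolution uses the stationarity clause of Assumption~\ref{assum1}: shifting each cell's time axis by its own delay produces ``delay-aligned'' processes with identical marginal laws, and independence across cells then promotes identical marginals to exchangeability of the joint law. Finally, for \eqref{eq2_2}, I would invoke the tower property. Since $\hat b_{j,\tau}=\mathbb E[b_{j,\tau}\mid\xi_{j,0},\ldots,\xi_{j,\tau}]$, the difference $b_{j,t-\delta_j^{(L)}}-\hat b_{j,t-\delta_j^{(L)}}$ has zero conditional mean given the local measurements of cell $j$ up to time $t-\delta_j^{(L)}$; by independence across cells this remains zero when conditioning on analogous data from the other cells in $\mathcal D_i^{(L)}$, and hence, by a further tower step, also when conditioning on the coarser $\mathcal F_t\triangleq\sigma(\sigma_{i,\tau}^{(L)},\tau\leq t)$, since $\sigma_{i,\tau}^{(L)}$ for $\tau\leq t$ uses only measurements up to time $t-\delta_j^{(L)}$ of each cell $j$. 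Summing over $j$ and using that $\sigma_{i,t}^{(L)}=\sum_j\hat b_{j,t-\delta_j^{(L)}}$ is $\mathcal F_t$-measurable then gives $\mathbb E[\sum_jb_{j,t-\delta_j^{(L)}}\mid\mathcal F_t]=\sigma_{i,t}^{(L)}$, which is \eqref{eq2_2}.
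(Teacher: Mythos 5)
Your proposal is correct and follows essentially the same route as the paper's Appendix B: the product form \eqref{eq1} from cross-cell independence, the decomposition of \eqref{eq2} into the aggregate-sum posterior, a uniform (exchangeable) conditional law over configurations summing to $x$ (with the heterogeneous delays $\delta_j^{(L)}$ neutralized by the stationarity clause of Assumption~\ref{assum1}, which the paper formalizes via permutation orbits of the delay-shifted measurement sequences), and the $\delta_j^{(L)}$-step Markov transition kernel for terms D and E. Your tower-property argument for \eqref{eq2_2}, resting on $\hat b_{j,\tau}$ being a posterior mean and on $\sigma_{i,t}^{(L)}$ being measurable with respect to the conditioning, is the same idea the paper packages as a genie-aided sufficiency argument.
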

\begin{proof}
See Appendix B.
\end{proof}
We note the following facts related to Theorem \ref{thm1}:
\begin{enumerate}
\item Equation (\ref{eq1}) implies that $\pi_{i,t}$ is statistically independent 
across the subsets of cells at different h-distances from cell $i$; this result follows from Assumption \ref{assum1},
which guarantees independence of spectrum occupancies and spectrum sensing across cells.
\item 
Equation (\ref{eq2}) contains five terms. "$A$" is the probability distribution of the delay mismatched aggregate spectrum occupancy given past estimates. "$B$" is the probability of a specific realization of $b_{j,t-\delta_j^{(L)}},j{\in}\mathcal D_i^{(L)}$, given that its aggregate equals $x$,
whereas "C" is the marginal over all these realizations;
since there are $|\mathcal D_i^{(L)}|!/x!/(|\mathcal D_i^{(L)}|{-}x)!$ combinations of such spectrum occupancies,
Assumption~\ref{assum1}
implies that they are uniformly distributed, yielding "$B$".\footnote{If Assumption~\ref{assum1} does not hold, estimates of aggregate occupancies could provide information as to
favor certain realizations over others, for instance, by leveraging different temporal correlations at different cells.}
Finally, terms "$D$" and "$E$" represent the $\delta_l^{(L)}$ steps transition probability from $b_{l-\delta_l^{(L)}}{=}\tilde{b}_l$ to $b_{j,t}{=}1$ and $b_{j,t}{=}0$, respectively.
\item Equation (\ref{eq2_2}) 
states that the expected delay mismatched aggregate occupancy over  $\mathcal D_i^{(L)}$ equals $\sigma_{i,t}^{(L)}$, \emph{independently} of past spectrum estimates. 
However, its probability distribution
("$A$" in (\ref{eq2}))
\emph{does} depend on past estimates.
\item In general, the
term "$A$" in (\ref{eq2}) cannot be computed in closed form, except in some special cases (\emph{e.g.}, noiseless measurements \cite{MicheICC}). However, we will now show that a closed-form expression is not required to compute  $I_{P,i}(\pi_{i,t})$, hence the expected utility in cell $i$ via  \eqref{utility}.
To this end,
in the next lemma we compute
$\mathbb P(b_{j,t}{=}1|\pi_{i,t})$ in closed form.
\end{enumerate}
 \begin{lemma}
\label{lem2}
For $j\in\mathcal D_i^{(L)}$, \emph{i.e.},
we have
\begin{align}
\label{probexp}
&\mathbb P(b_{j,t}=1|\pi_{i,t})
=
\pi_B+\mu^{\delta_j^{(L)}}
\Biggr(\frac{\sigma_{i,t}^{(L)}}
{|\mathcal D_i^{(L)}|}-\pi_B\Biggr).
\end{align}
\end{lemma}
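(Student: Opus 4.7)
The plan is to compute the marginal of the joint distribution (\ref{eq2}) at the single cell $j\in\mathcal D_i^{(L)}$ by direct summation, then invoke (\ref{eq2_2}). First, by the product form (\ref{eq1}) in Theorem \ref{thm1}, only the factor corresponding to the h-distance $L$ containing $j$ is relevant, so I may restrict attention to that factor and sum over all $b_l$, $l\in\mathcal D_i^{(L)}\setminus\{j\}$, while fixing $b_j=1$.

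Next, inside the summation in (\ref{eq2}), observe that for every $l\neq j$, terms "$D$" and "$E$" constitute a Bernoulli probability mass function in $b_l$, so $\sum_{b_l\in\{0,1\}}[\pi_B{+}\mu^{\delta_l^{(L)}}(\tilde b_l{-}\pi_B)]^{b_l}[1{-}\pi_B{-}\mu^{\delta_l^{(L)}}(\tilde b_l{-}\pi_B)]^{1-b_l}=1$. Only the factor at $l=j$ survives, evaluated at $b_j=1$, namely $\pi_B+\mu^{\delta_j^{(L)}}(\tilde b_j-\pi_B)$. The expression collapses to
\begin{align*}
\mathbb P(b_{j,t}{=}1|\pi_{i,t})
=\sum_{x=0}^{N}\!\mathrm{P}_A(x)\,\frac{x!(N{-}x)!}{N!}
\!\!\sum_{\tilde{\mathbf b}:\sum_l\tilde b_l=x}\!\!
\bigl[\pi_B{+}\mu^{\delta_j^{(L)}}(\tilde b_j{-}\pi_B)\bigr],
\end{align*}
where $N\triangleq|\mathcal D_i^{(L)}|$ and $\mathrm{P}_A(x)$ denotes term "$A$".

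Then I split the inner sum according to $\tilde b_j\in\{0,1\}$: there are $\binom{N-1}{x-1}$ configurations with $\tilde b_j=1$ and $\binom{N-1}{x}$ with $\tilde b_j=0$. Using Pascal's rule $\binom{N-1}{x-1}+\binom{N-1}{x}=\binom{N}{x}$ and the identity $\frac{x!(N-x)!}{N!}\binom{N-1}{x-1}=x/N$, the inner sum simplifies (after multiplication by $\frac{x!(N-x)!}{N!}$) to $\pi_B(1-\mu^{\delta_j^{(L)}})+\mu^{\delta_j^{(L)}}\,x/N$. What remains is a sum over $x$ weighted by $\mathrm{P}_A(x)$, which is exactly the conditional expectation of $\sum_{l\in\mathcal D_i^{(L)}}b_{l,t-\delta_l^{(L)}}$ given the aggregate history, divided by $N$. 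By (\ref{eq2_2}), this expectation equals $\sigma_{i,t}^{(L)}$, giving the claimed formula after rearrangement.

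The calculation is essentially routine once Theorem \ref{thm1} is in hand; the only mildly delicate step is recognizing that the combinatorial factor $\frac{x!(N-x)!}{N!}$ in term "$B$" cancels the binomial counts in exactly the right way to turn the marginal over $\tilde b_j$ into the linear-in-$x$ form that makes (\ref{eq2_2}) directly applicable. Bypassing any closed-form evaluation of term "$A$" is what makes the result clean, and this is the key observation to flag in the write-up.
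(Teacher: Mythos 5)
Your proposal is correct and follows essentially the same route as the paper's proof in Appendix~C: both restrict to the single factor of \eqref{eq1} containing $j$, exploit the uniform (exchangeable) distribution of the delayed occupancy vector given its aggregate $x$ to reduce the marginal at $j$ to $\pi_B(1-\mu^{\delta_j^{(L)}})+\mu^{\delta_j^{(L)}}x/|\mathcal D_i^{(L)}|$ via the $\binom{N-1}{x-1}$ count, and then invoke \eqref{eq2_2} to replace the conditional mean of $x$ by $\sigma_{i,t}^{(L)}$. The only cosmetic difference is the order of marginalization (you sum out the current states $b_l$, $l\neq j$, directly from \eqref{eq2}, whereas the paper first passes through the conditional law of the delayed vector $\mathbf b_t^{(\delta)}$), which does not change the substance of the argument.
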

\begin{proof}
See Appendix C.
\end{proof}
We now compute $I_{P,i}(\pi_{i,t})$. Partitioning
  $\mathcal C$ based on the h-distances from $i$, \eqref{XXX} yields
\begin{align}
\label{XXX2}
I_{P,i}(\pi_{i,t})
\triangleq 
\sum_{L=0}^{D}\sum_{j\in\mathcal D_i^{(L)}}
\frac{\phi_{j,i}}{\phi_{i,i}}\mathbb P\left(b_{j,t}=1|\pi_{i,t}\right).
\end{align}
Then,  substituting \eqref{probexp} in (\ref{XXX2}) and letting
 \begin{align}
\label{delaymismatinter}
\left\{
\begin{array}{l}
\Phi_{\mathrm{tot},i}\triangleq
\sum_{j\in\mathcal C}\frac{\phi_{j,i}}{\phi_{i,i}},
\\
\Phi_{\mathrm{del},i}^{(L)}\triangleq\sum_{j\in\mathcal D_i^{(L)}}\mu^{\delta_j^{(L)}}
\frac{\phi_{j,i}}{\phi_{i,i}}
\end{array}
\right.
\end{align}
be the total \emph{mutual} interference generated between the SUs in cell $i$ and the PU network ($\Phi_{\mathrm{tot},i}$),
and the \emph{delay compensated mutual interference} generated between cell $i$ and the
cells at h-distance $L$ from cell $i$ ($\Phi_{\mathrm{del},i}^{(L)}$), we obtain the following lemma.
\begin{lemma}
\label{lem:exprew}
{The expected PU activity experienced in cell $i$ is given by
\begin{align}
\label{XXX3}
I_{P,i}(\boldsymbol{\sigma}_{i,t})
\triangleq 
\pi_B\Phi_{\mathrm{tot},i}
+\sum_{L=0}^{D}
\Biggr(\frac{\sigma_{i,t}^{(L)}}{|\mathcal D_i^{(L)}|}-\pi_B\Biggr)
\Phi_{\mathrm{del},i}^{(L)}.
\end{align}}
\end{lemma}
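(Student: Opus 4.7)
The plan is to derive the formula by direct substitution into the partitioned expression \eqref{XXX2}. Since the sets $\{\mathcal D_i^{(L)}\}_{L=0}^{D}$ form a disjoint cover of $\mathcal C$ by Definition \ref{def2} (each cell has a unique h-distance to cell $i$ determined by the smallest cluster containing both), I can regroup the sum $\sum_{j\in\mathcal C}\frac{\phi_{j,i}}{\phi_{i,i}}\mathbb P(b_{j,t}=1|\pi_{i,t})$ according to h-distance, which is exactly the form of \eqref{XXX2}.

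Next I would invoke Lemma \ref{lem2} to replace $\mathbb P(b_{j,t}=1|\pi_{i,t})$ by $\pi_B + \mu^{\delta_j^{(L)}}\bigl(\sigma_{i,t}^{(L)}/|\mathcal D_i^{(L)}|-\pi_B\bigr)$ for every $j\in\mathcal D_i^{(L)}$. The key observation enabling a clean closed form is that, within a fixed level $L$, the factor $\sigma_{i,t}^{(L)}/|\mathcal D_i^{(L)}|-\pi_B$ does not depend on the particular cell $j\in\mathcal D_i^{(L)}$ and can therefore be pulled outside the inner sum, while the cell-dependent delay $\delta_j^{(L)}$ stays inside and is weighted by $\phi_{j,i}/\phi_{i,i}$.

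After this substitution I would split the resulting double sum into two pieces. The baseline piece $\pi_B\sum_{L=0}^{D}\sum_{j\in\mathcal D_i^{(L)}}\frac{\phi_{j,i}}{\phi_{i,i}}$ collapses to $\pi_B\Phi_{\mathrm{tot},i}$ using $\bigcup_L \mathcal D_i^{(L)}=\mathcal C$ and the definition of $\Phi_{\mathrm{tot},i}$ in \eqref{delaymismatinter}. The correction piece $\sum_{L=0}^{D}\bigl(\sigma_{i,t}^{(L)}/|\mathcal D_i^{(L)}|-\pi_B\bigr)\sum_{j\in\mathcal D_i^{(L)}}\mu^{\delta_j^{(L)}}\frac{\phi_{j,i}}{\phi_{i,i}}$ contains an inner sum that matches the definition of $\Phi_{\mathrm{del},i}^{(L)}$ in \eqref{delaymismatinter} verbatim. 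Recombining yields the claim.

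The argument is essentially bookkeeping: all of the analytical work has already been done in Theorem \ref{thm1} and Lemma \ref{lem2}, and the lemma just packages the result in terms of the two interference summaries $\Phi_{\mathrm{tot},i}$ and $\Phi_{\mathrm{del},i}^{(L)}$. Consequently I do not anticipate any real obstacle; the only point meriting explicit verification is the disjoint-exhaustive partition property of $\{\mathcal D_i^{(L)}\}_{L=0}^{D}$, which is immediate from the tree construction.
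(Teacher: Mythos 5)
Your proposal is correct and matches the paper's own derivation, which is given inline just before the lemma: partition $\mathcal C$ by h-distance to get \eqref{XXX2}, substitute \eqref{probexp} from Lemma \ref{lem2}, pull the level-dependent factor $\sigma_{i,t}^{(L)}/|\mathcal D_i^{(L)}|-\pi_B$ out of the inner sum, and identify the two remaining inner sums with $\Phi_{\mathrm{tot},i}$ and $\Phi_{\mathrm{del},i}^{(L)}$ from \eqref{delaymismatinter}. Your explicit note that $\{\mathcal D_i^{(L)}\}_{L=0}^{D}$ is a disjoint exhaustive partition of $\mathcal C$ is the only detail the paper leaves implicit, and it holds by Definitions \ref{hdist} and \ref{def2}.
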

Above, for convenience, we have expressed the dependence of $I_{P,i}(\cdot)$ on 
$\boldsymbol{\sigma}_{i,t}$, rather than on $\pi_{i,t}$.
Thus, the local utility (\ref{utility}) can be computed accordingly.
{Note that $I_{P,i}(\boldsymbol{\sigma}_{i,t})$ depends on the clustering of cells
across multiple spatial scales that affect the delay mismatched aggregate
spectrum estimates $\sigma_{i,t}^{(L)}$, hence on the tree employed for hierarchical information exchange.
 In the next section, we propose a tree design matched to the structure of interference.}

\section{Tree Design}
 \label{treedesign}
The network utility depends crucially on the tree employed for information exchange. Its optimization over all possible trees is a combinatorial problem with high complexity. Thus, we use {\em agglomerative} clustering, developed in \cite[Ch. 14]{friedman:01}, in which a tree is built by successively combining smaller clusters based on a "closeness" metric, that we now develop.

Note that in our problem the goal is for cell $i$ to estimate 
the INR generated  to the PUs as accurately as possible, $\sum_{j=1}^{N_C}\frac{\phi_{j,i}}{\phi_{i,i}}b_{j,t}$.
This estimate is denoted as $I_{P,i}(\pi_{i,t})$, see \eqref{XXX}.
In fact, given $I_{P,i}(\pi_{i,t})$, SUs in cell $i$ can schedule the optimal SU traffic 
$a_{i,t}^*(I_{P,i}(\pi_{i,t}))$ via \eqref{rewardinb}, hence the optimal utility via \eqref{utility}.
With the hierarchical information exchange described in the previous section,
this estimate is given by \eqref{XXX3}.

\label{p19}
Therefore, the goal is to design the tree in such a way as to estimate 
$\sum_{j=1}^{N_C}\frac{\phi_{j,i}}{\phi_{i,i}}b_{j,t}$ as accurately as possible via 
$I_{P,i}(\boldsymbol{\sigma}_{i,t})$ in \eqref{XXX3}. 
At the same time, since all cells share the same tree, such design should 
take into account this goal across all cells.
We develop a heuristic metric to attain this goal.
To this end, we notice the following facts:
1) since higher levels correspond to larger and larger clusters over which spectrum estimates are aggregated
(for instance, with reference to Fig.~\ref{fig:sysmo}, 
$|\mathcal D_1^{(0)}|{=}1$,
$|\mathcal D_1^{(1)}|{=}3$ and $|\mathcal D_1^{(2)}|{=}4$ at h-distances $0,1,2$, respectively),
higher levels correspond to coarser estimates of spectrum occupancy, whereas lower levels
correspond to fine-grained estimates; 2) from
\eqref{XXX3}, it is apparent that 
terms with larger $\Phi_{\mathrm{del},i}^{(L)}$ affect more strongly $I_{P,i}(\boldsymbol{\sigma}_{i,t})$.
Therefore,
cluster aggregation resulting in larger $\Phi_{\mathrm{del},i}^{(L)}$
should occur at lower hierarchical levels, associated with fine-grained estimation. 
Taking these facts into account, we denote the "aggregation" metric between $n,m{\in}\mathcal H^{(L)}$ as
\begin{align}
\label{similarity}
\Gamma_{n,m}^{(L)}{=}
 \mu^{\Delta_{n,m}}
 \Biggr[
\sum_{i\in\mathcal C_n^{(L)}}
 \sum_{j\in\mathcal C_m^{(L)}}
 \mu^{\delta_j^{(L)}}\frac{\phi_{j,i}}{\phi_{i,i}}
 {+}\sum_{i\in\mathcal C_m^{(L)}}
 \sum_{j\in\mathcal C_n^{(L)}}
 \mu^{\delta_j^{(L)}}\frac{\phi_{j,i}}{\phi_{i,i}}
 \Biggr].
\end{align}
$\Gamma_{n,m}^{(L)}$ represents the benefit of aggregating together the clusters associated to level-$L$ cluster-heads $m$ and $n$,
$\mathcal C_m^{(L)}$ and $\mathcal C_n^{(L)}$, respectively, into one level-$(L{+}1)$ cluster, and 
 $\Delta_{n,m}$ is the additional delay incurred to aggregate them.\footnote{$\Delta_{n,m}$ can be chosen, for instance, based on the number of hops traversed to aggregate estimates at the upper level $(L+1)$. This number is approximately proportional to the distance between cluster heads $n$ and $m$.}
In fact, if such aggregation occurs,
from the perspective of cell $i\in\mathcal C_n^{(L)}$,
$\mathcal C_m^{(L)}$ will become the set of cells at h-distance $L+1$ from cell $i$,
$\mathcal D_i^{(L+1)}\equiv\mathcal C_m^{(L)}$, so that,
letting $\delta_j^{(L+1)}{=}\Delta_{n,m}{+}\delta_j^{(L)}$ as in \eqref{deltaupdate},
 the first term associated to $i$ in \eqref{similarity} is equivalent to
\begin{align}
\sum_{j\in\mathcal C_m^{(L)}}\mu^{\Delta_{n,m}+\delta_j^{(L)}}\frac{\phi_{j,i}}{\phi_{i,i}}
=\Phi_{\mathrm{del},i}^{(L+1)}.
\end{align}
The second term in \eqref{similarity}
 has a similar interpretation, relative to cell $i\in\mathcal C_m^{(L)}$. Thus, the aggregation metric
 $\Gamma_{n,m}^{(L)}$ corresponds to
 $\sum_{i\in\mathcal C_n^{(L)}}\Phi_{\mathrm{del},i}^{(L+1)}+\sum_{i\in\mathcal C_m^{(L)}}\Phi_{\mathrm{del},i}^{(L+1)}$,
 if clusters $\mathcal C_n^{(L)}$ and $\mathcal C_m^{(L)}$ are aggregated together.
As justified previously, this quantity should be made as large as possible in order to maximize the informativeness of the 
aggregation of estimates.

 In addition, we want to limit the cost incurred to send measurements up and down the hierarchy.
Assuming that estimates are transmitted via multi-hop, the cost will be proportional to the distance between clusters. Thus, each time we combine two clusters $\mathcal C_n^{(L)}$ and $\mathcal C_m^{(L)}$ to form the tree, we incur an additional {\em aggregation cost per cell} $C_{n,m}$, 
defined as
\begin{equation}
\label{worstcasecost}
	C_{n,m}\ =  \frac{1}{N_C}\max_{i \in \mathcal{C}_n^{(L)},j \in \mathcal{C}_m^{(L)}}d_{i,j},
\end{equation}
representing the worst-case aggregation cost, where $d_{i,j}$ is the distance between cells $i$ and $j$.

\begin{algorithm}
\small
\SetKwData{Left}{left}\SetKwData{This}{this}\SetKwData{Up}{up}
\SetKwFunction{Union}{Union}\SetKwFunction{FindCompress}{FindCompress}
\SetKwInOut{Input}{input}\SetKwInOut{Output}{output}
\Input{Cells $\mathcal{C}$, interference matrix $\boldsymbol{\Phi}$, max cost $C_{\max}$ (per cell)}
\Output{A hierarchy of clusters $\mathcal{C}_k^{(L)}$, $k\in\mathcal H^{(L)}$, $L=1,\dots, D$, delays $\delta_i^{(L)}$, and aggregation cost $C_{\mathrm{cell}}$}
\BlankLine
{\bf Initialize:} $L{\leftarrow}0$, $\mathcal H^{(L)}{\leftarrow}\mathcal C$, $\mathcal{C}_i^{(0)}{\leftarrow}\{i\}$, $\delta_i^{(0)}{=}0,\forall i{\in}\mathcal C$, $C_{\mathrm{cell}}{=}0$\;
\BlankLine
\Repeat{{\bf termination}}{
  $\Delta_{n,m},C_{n,m},\forall n,m\in\mathcal{H}^{(L)},n\neq m$
  (delays and cost are computed, e.g., $\propto$\#hops)\;
    $\mathcal F^{(L)}{\leftarrow}\{
    (n,m){\in}\mathcal{H}^{(L)2}:n,{\neq}m,
 C_{\mathrm{cell}}{+}C_{n,m}{\leq}C_{\max}
    \}$
   (set of unpaired feasible pairs)\;
  \If{
  $|\mathcal F^{(L)}|=0$ \emph{(cost exceeded)}
  }{
  \bf{terminate}}
	$\mathcal{H}^{(L+1)}\leftarrow \emptyset$,
	$k_{next}\leftarrow 1$ (empty set of next level cluster heads and cluster head counter)\;  
	$\mathcal{H}^{(L)}_{unp} \leftarrow \mathcal{H}^{(L)}$ (set of unpaired cluster heads)\; 
\While{$|\mathcal F^{(L)}|>0$}{
	$(n^*,m^*){\leftarrow}
    \underset{(n,m){\in}\mathcal{F}^{(L)}}{\arg\max} \Gamma_{n,m}^{(L)}$ 
    (find unpaired feasible cluster pair with max $\Gamma$, see (\ref{similarity}))\;
	$\mathcal{H}^{(L+1)} \leftarrow \mathcal{H}^{(L+1)}\cup\{k_{next}\},\ 
		\mathcal{C}_{k_{next}}^{(L+1)} \leftarrow \mathcal{C}_{n^*}^{(L)}\cup\mathcal{C}_{m^*}^{(L)}$\; 
  $\delta_i^{(L+1)}{=}\delta_i^{(L)}{+}\Delta_{n^*,m^*},\forall i{\in}\mathcal{C}_{k_{next}}^{(L+1)},
  C_{\mathrm{cell}}{\leftarrow}C_{\mathrm{cell}}{+}C_{n^*,m^*}
  $ (update delay and cost)\;
	$\mathcal{H}_{unp}^{(L)} \leftarrow \mathcal{H}_{unp}^{(L)}\setminus \{n^*,m^*\}$ (remove paired clusters)\;
    $\mathcal F^{(L)}\leftarrow\{
    (n,m){\in}\mathcal{H}_{unp}^{(L)}\times\mathcal{H}_{unp}^{(L)}:n,{\neq}m,
 C_{\mathrm{cell}}{+}C_{n,m}{\leq}C_{\max}
    \}$  (updated feasible pairs)\;
		$k_{next} \leftarrow k_{next}+1$ \; 
    }
	\ForAll{$k\in\mathcal{H}_{unp}^{(L)}$ \emph{(unpaired clusters incur excessive cost, ``pair'' each with itself)}}{
$\mathcal{H}^{(L+1)} \leftarrow \mathcal{H}^{(L+1)}\cup\{k_{next}\},\ 
\mathcal{C}_{k_{next}}^{(L+1)} \leftarrow \mathcal{C}_{k}^{(L)}$ \; 
        $\delta_i^{(L+1)}=\delta_i^{(L)},\forall i\in\mathcal{C}_{k_{next}}^{(L+1)}$ (no additional delay/cost)\; 
		$k_{next} \leftarrow k_{next}+1$ \; 
}
	$L \leftarrow L+1$ (Proceed to the next level)\;
}
\caption{
Hierarchical Aggregation Tree Construction}\label{alg:clustering}
\end{algorithm}

The algorithm proceeds as shown in Algorithm \ref{alg:clustering}. We initialize it with the $N_C$ sets containing the single cells, $\mathcal{C}_i^{(0)}=\{i\},i=1,2,\dots,N_C$, and aggregation cost (per cell) $C_{\mathrm{cell}}= 0$. Then, at each level-$L$, we iterate over all cluster pairs, pairing those with highest  aggregation metric $\Gamma$. This forms the set of level-$(L+1)$ clusters; we update the delays accordingly and update $C_{\mathrm{cell}}$ by adding $C_{n^*,m^*}^{(L)}$. If the number of clusters at level-$L$ happens to be odd, one cluster may not be paired, in which case it forms its own level-$(L{+}1)$ cluster, and the delay remains unchanged. The algorithm proceeds until either: (1) the cluster $\mathcal{C}_{1}^{(L)}$ contains the entire network, \emph{i.e.}, a tree has been formed, or (2) $C_{\mathrm{cell}}> C_{\mathrm{max}}$, i.e., the allowed cost is exceeded.  
Agglomerative clustering has complexity $O(N_C^2 \log(N_C))$, where the term $N_C^2$ owes to searching over all pairs of clusters, and the term $\log(N_C)$ is related to the tree depth, which is logarithmic in the number of cells  \cite[Ch. 14]{friedman:01}.
In the next section, we will compare our scheme with the consensus-based scheme \cite{li:TVT10}:
 this scheme requires a "connected" graph to achieve consensus, whose complexity is 
 $O(N_C^3d)$, with $d$ being the desired degree of each node in the graph \cite{Bhuiyan}.
 Therefore, by leveraging the tree structure, our tree construction is more computationally efficient. However, tree design 
 will be executed only at initialization, or when the network topology changes, which is infrequent in fixed cellular networks as considered in this work, hence it is not expected to have a significant impact on the long-term performance.
 \section{Numerical Results}
  \label{numres}
 In this section, we provide numerical results based on Monte Carlo simulations.
We adopt a model with stochastic blockage \cite{bai:TWC2014}: rectangular blockages of fixed height and width are placed randomly on the {\em boundaries} between cells. 
   Each blockage has width $1$ and height $5$, and is randomly placed.
We say that links between cells $i,j$ are {\em line of sight} (LOS) if the line segment connecting the centers of cells $i$ and $j$ does not intersect any blockage object. Otherwise, such links are said to be {\em non-LOS} (NLOS). Accordingly, we define LOS and NLOS large-scale pathloss exponents $\alpha_L{=}2.1$ and $\alpha_N{=}3.3$, respectively.
These values were derived experimentally in \cite[Table I]{Sun} at a reference frequency of $2\mathrm{GHz}$. 

  \begin{figure*}
    \centering
    \hspace{5mm}
    \subfigure[Impact of blockages, delay parameter $\gamma=0$.]
    {
        \includegraphics[width=0.4\linewidth,trim = 6mm 0mm 15mm 8mm,clip=true]{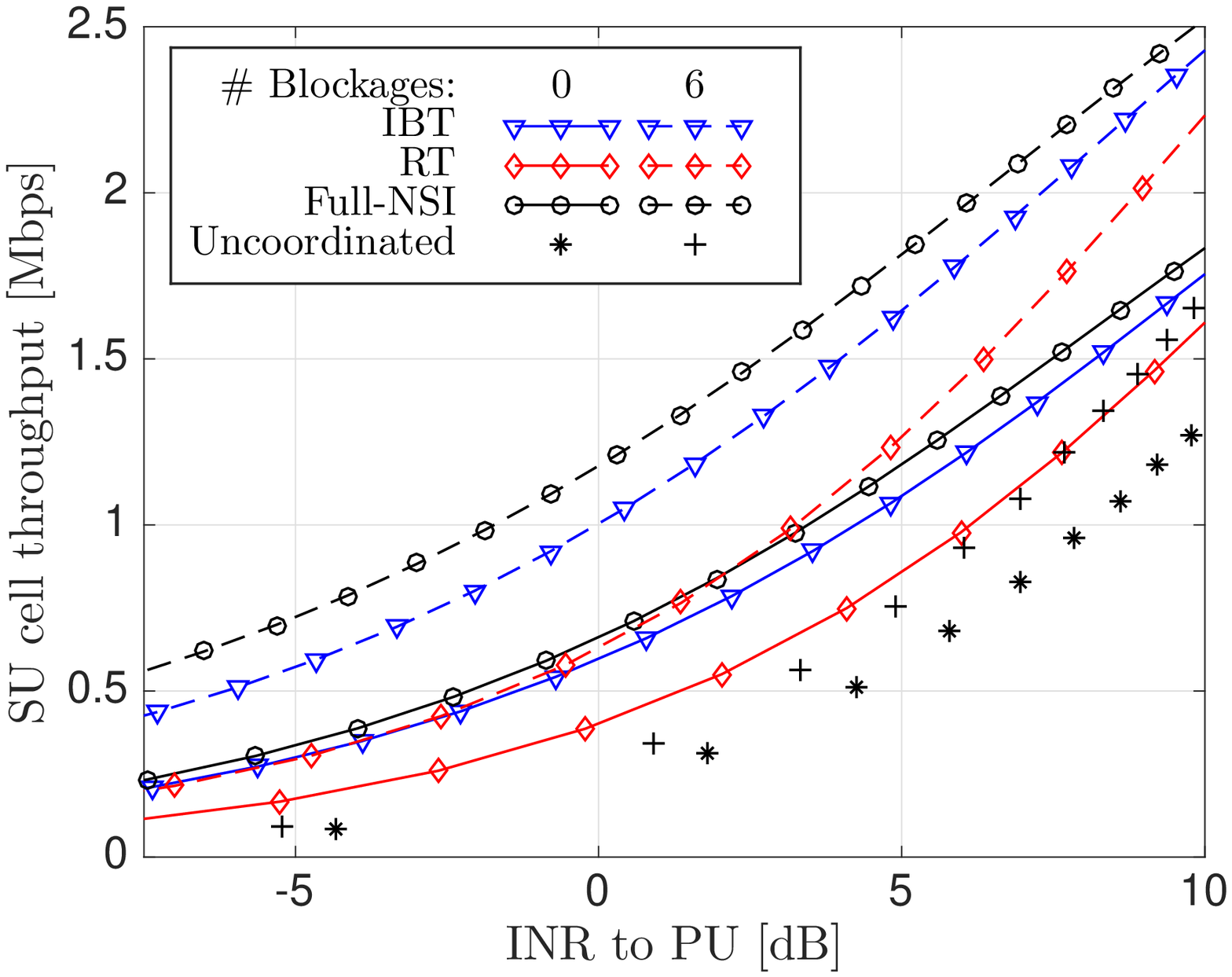}
        \label{fig:simres1}
    }
    \hfill
    \subfigure[Impact of delay, $1$ blockage.]
    {
        \includegraphics[width=0.4\linewidth,trim = 6mm 0mm 15mm 8mm,clip=true]{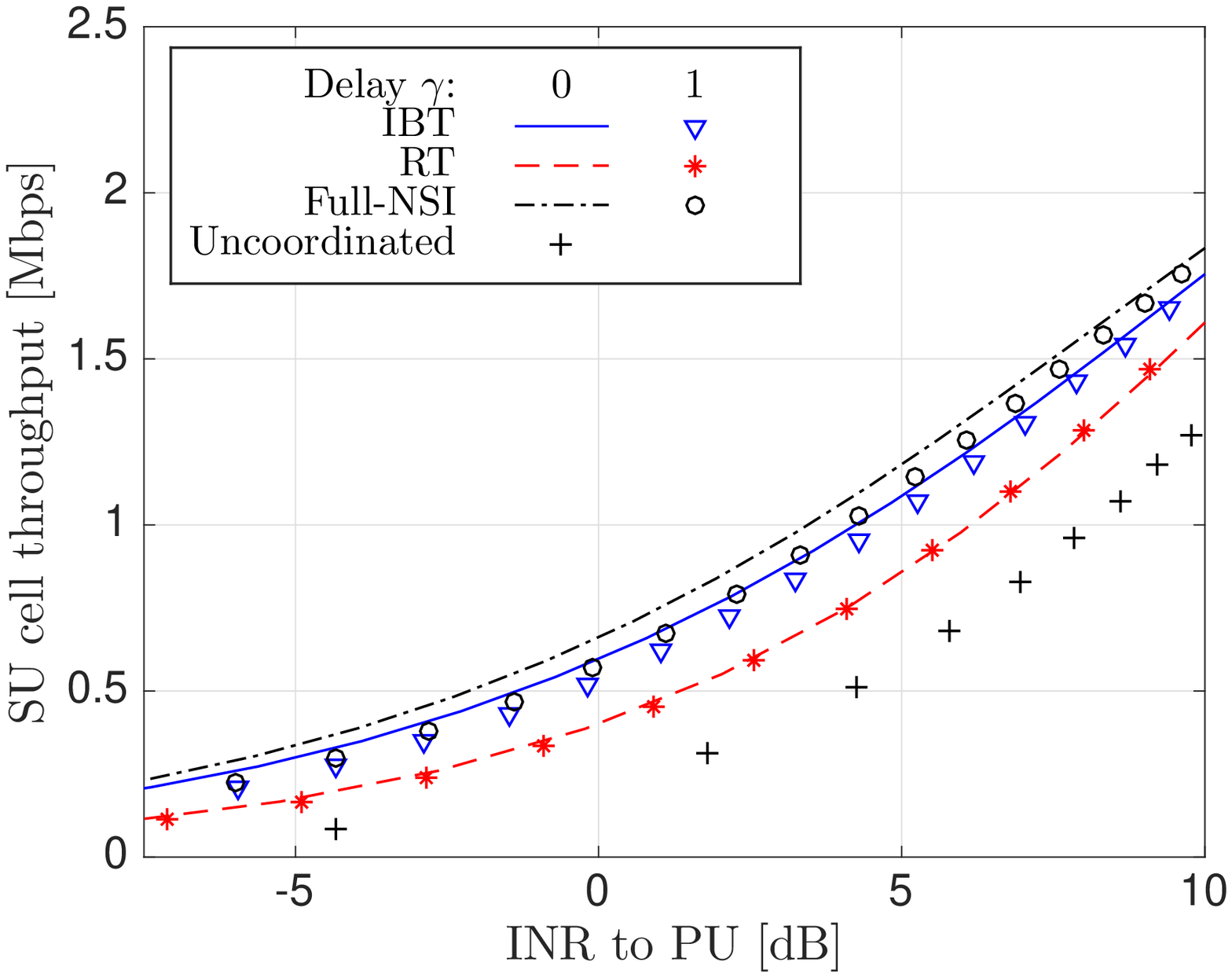}
        \label{fig:simres2}
    }
        \hspace{5mm}
    \caption{SU cell throughput versus average INR experienced at PUs, cost constraint $C_{\max}=\infty$.}
    \label{fig:simres12}
\end{figure*}
  
 In the simulations,
 we consider a $16{\times}16$ cells network over an area of $1.6\mathrm{\mathrm km}{\times}1.6\mathrm{\mathrm km}$. We set the parameters as follows: 
SINR decoding threshold $\mathrm{SINR}_{\mathrm{th}}{=}5\mathrm{dB}$,
noise power spectral density $N_0{=}-173\mathrm{dBm/Hz}$, bandwidth $W_{tot}{=}20\mathrm{MHz}$,
   $\nu_1{=}0.005$, $\nu_0{=}0.095$, hence $\pi_B{=}0.05$ and $\mu{=}0.9$. The interference matrix $\boldsymbol{\Phi}$
  is calculated as  in \eqref{pathloss}, where
 $P_{tx}{=}{-}11\mathrm{dBm}$ is the transmission power, common to all PUs and SUs,
 $L_{ref}{=}74\mathrm{dB}$ is the large-scale pathloss based on Friis' free space propagation, calculated at a reference distance $d_{ref}{=}50\mathrm{m}$ (equal to the average cell radius);
 $\alpha_{i,j}{=}\alpha_L$ if there is LOS between the centers of cells $i$ and $j$,
 otherwise, $\alpha_{i,j}{=}\alpha_N$ in case of NLOS (path obstructed by blockage).
 
  We assume that local estimation is error-free ($\epsilon_{F}{=}\epsilon_M{=}0$) and $M_{i,t}{\gg}1,\forall i,t$,
  corresponding to a dense setup with large number of SUs.
  In this work, we do not consider the overhead of local spectrum sensing within each cell,
  which can be severe in dense networks and may be reduced by using decentralized techniques to select the most informative SUs, such as  in  \cite{Wu2012}; these considerations are outside the scope of this paper, and are left for future work.
  We average the results over $200$ realizations of the blockage model. For each one of these, we generate
  a sequence of $1000$ frames to generate the Markov process $\{\mathbf b_t,t\geq 0\}$.
  We consider the following schemes:

\begin{itemize}
\item a scheme with the \emph{interference-based tree} (IBT) generated with Algorithm~\ref{alg:clustering} by leveraging the specific structure of interference, delays and aggregation costs;
\item a scheme with a \emph{{random} tree} (RT),
{in which the "max $\Gamma$"  cluster association in Algorithm~\ref{alg:clustering} is replaced
with a random association. The aim of using this scheme is to test the importance of generating a tree \emph{matched} to the structure interference;}
\item a scheme with full (but delayed) NSI (Full-NSI); since this scheme represents the best we can do, provided that we can afford the cost of acquisition of full NSI, it will be used to evaluate the sub-optimality of the proposed IBT in terms of the trade-off between SU cell throughput and interference to PUs;\label{p6}
\item an uncoordinated scheme where SUs access the spectrum with constant probability $p_{tx}$, i.i.d. over time and across SUs (Uncoordinated).
\end{itemize}
 We assume that the delay to propagate spectrum measurements between cells $i$ and $j$ is proportional to their distance, \emph{i.e.}, $\delta_{i,j}=\gamma d_{i,j}$, where $\gamma$ is varied in $[0,1]$.

In order to separate the effects of blockages, delay, and cost of aggregation on the performance, we evaluate the impact of: 1) Blockages, but no delay nor cost constraint ($\gamma{=}0$, $C_{\max}{=}\infty$, Fig. \ref{fig:simres1});
2) Delay, with one blockage but no cost constraint (1 blockage, $C_{\max}{=}\infty$, Fig. \ref{fig:simres2}); 3) 
Cost of aggregation, with one blockage and no delay (1 blockages, $\gamma{=}0$, Fig. \ref{fig:simres3}).
In all these figures, unless otherwise stated, we evaluate the lower bound to the SU cell throughput, given by
 \eqref{rtt}
and the INR experienced at the PUs
(both averaged over cells and over time). 
We vary the parameter $\lambda$ in the utility function \eqref{utility}
{and the SU access probability $p_{tx}$ in the "Uncoordinated" scheme,}
 to obtain the desired trade-off between SU cell throughput and INR.
 
In Fig. \ref{fig:simres1}, we
notice that, for all schemes, the presence of blockages improves the performance.
 In fact, \emph{blockages provide a form of interference mitigation}.
 By comparing the schemes with each other,
the best performance is obtained with Full-NSI. In fact, each cell can leverage the most refined information on
the interference pattern. However, as we will see in Fig. \ref{fig:simres3}, \emph{this comes at a huge cost to propagate NSI over the network.}
Remarkably, IBT incurs only a 15\% (for 6 blockages) and 10\% (for no blockages) performance degradation with respect to Full-NSI, for a reference INR of $0$dB
(this result becomes more remarkable when comparing the aggregation costs in Fig. \ref{fig:simres3}).
Additionally, RT incurs a severe performance degradation with respect to IBT
(60\% and 30\% degradation for 6 blockages and no blockages, respectively, for a reference INR of $0$dB);
this fact highlights the importance of designing a tree matched to the structure of interference, as done in Algorithm \ref{alg:clustering}, and validates our choice of the $\Gamma$ metric used to associate clusters in the algorithm, defined in \eqref{similarity}.
Finally, we observe that the "Uncoordinated" scheme performs the worst, since it does not adapt the SU transmissions to interference.

\begin{figure}
\centering  
\includegraphics[width=0.5\linewidth,trim = 2mm 0mm 14mm 8mm,clip=true]{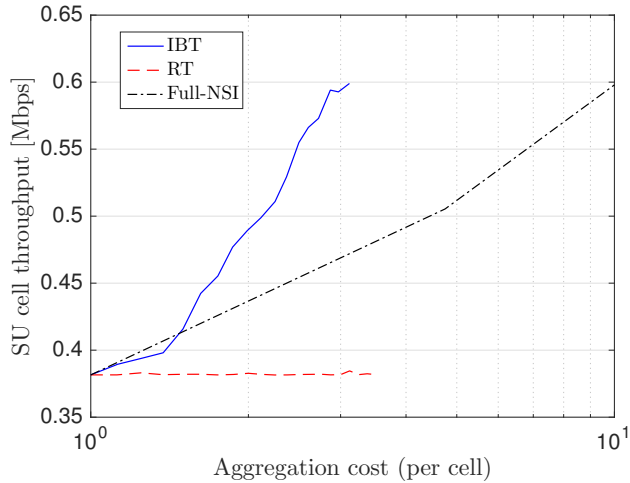}
\caption{Impact of aggregation cost on the
SU cell throughput, with $0$dB maximum constraint on the
 average INR caused to PUs. $1$ blockage; no delay.}
         \label{fig:simres3}
\end{figure}

 In Fig. \ref{fig:simres2}, we evaluate the impact of delay (note that "Uncoordinated" is not affected by delays). 
 As expected, the SU cell throughput decreases as the delay augments. This follows from the fact that 
 delayed spectrum estimates represent less accurately the actual spectrum occupancy, and may become outdated,
 and thus less informative for scheduling decisions of SUs.
 However, the performance degradation is minimal. In fact,
 the spectrum occupancy varies slowly over time: the expected duration of
 a period during which the spectrum is occupied by a PU is $1/\nu_0{\simeq}10$ frames,
 hence only the spectrum estimates received with delay larger than 10 become non informative;
 these estimates, in turn, correspond to cells that are farther away from the reference cell, hence 
 less susceptible to interference caused by the reference cell.\footnote{We remind that the delay to propagate spectrum measurements between cells $i$ and $j$ is  $\delta_{i,j}{=}\gamma d_{i,j}$, hence only farther cells are affected by large delays.}
 We notice a similar trend as in Fig.~\ref{fig:simres1} in terms of the comparison among the schemes employed.
 
  In Fig. \ref{fig:simres3}, we evaluate the trade-off between aggregation cost and performance. 
To this end:
 \begin{itemize}
 \item We vary the cost constraint $C_{\max}$ in Algorithm \ref{alg:clustering} to obtain a trade-off for IBT and RT; we use a "worst-case" cost evaluation with multi-hop, given by (\ref{worstcasecost}).
 \item To evaluate Full-NSI, each cell collects \emph{partial} but fine-grained NSI up to a certain radius; larger radius corresponds to more comprehensive NSI but larger cost; using
 multi-hop for NSI aggregation, the cost equals approximately the number of cells within the  radius. This scheme borrows from \cite{Vasilakos}, where
 each cell informs neighboring ones of the resource blocks used by its users. 
 \end{itemize}
We notice that IBT achieves a much better trade-off than Full-NSI: it enables SUs to gather relevant information 
for scheduling decisions, with minimal cost in the exchange of state information.
 In fact, by aggregating NSI at multiple layers, as opposed to maintaining fine-grained NSI, IBT retains the gains of partial NSI, but at a much smaller cost of aggregation.
In particular, for a reference SU cell throughput of 0.6Mbps, IBT incurs one-third of the cost of aggregation of Full-CSI.
On the other hand, RT does not improve as the cost increases; in fact, the random tree construction in RT results in information exchange which is not matched to the structure of interference, hence less informative to network control.

\begin{figure}
\centering  
\includegraphics[width=0.5\linewidth,trim = 2mm 0mm 14mm 8mm,clip=true]{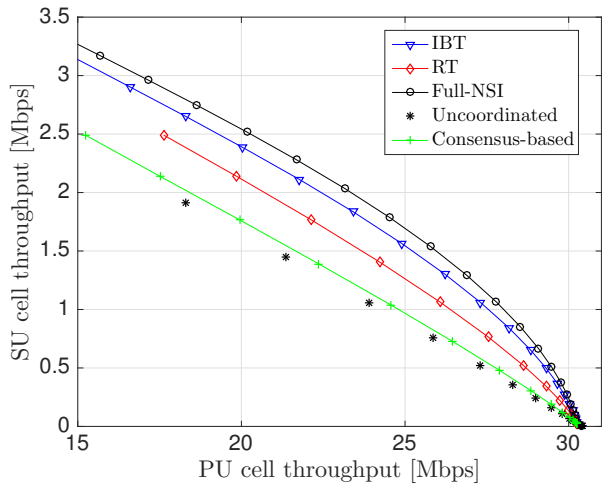}
\caption{Simulation with random topology and realistic large-scale pathloss. Comparison with "consensus" scheme \cite{li:TVT10}.}
\label{fig:simresreal}
\end{figure}

So far, in our analysis and numerical evaluation we have assumed that large-scale pathloss is calculated between cell centers, and collected in the INR matrix
$\boldsymbol{\Phi}$. However, large-scale pathloss between a transmitter and a receiver depends on their mutual position within their respective cell.
Additionally, we used the SU cell throughput lower bound \eqref{rtt}.
This motivates us to evaluate the performance in a more realistic scenario, where these assumptions are relaxed. In Fig. \ref{fig:simresreal}, we evaluate a realistic scenario with the following features:
\begin{itemize}
\item We generate $100$ independent realizations of the network topology with $N_C{=}256$ PU cells; in each realization, the transmitter-receiver pairs are deployed randomly over an area of $1.6\mathrm{\mathrm km}{\times}1.6\mathrm{\mathrm km}$; an irregular cell topology is thus defined based on minimum distance; 10 SUs are deployed randomly in each cell (each with its own receiver).
\item The large-scale pathloss is computed between each transmitter and receiver based on their relative distance, as in \eqref{pathloss}. The INR matrix $\boldsymbol{\Phi}$ is computed relative to the cell centers. This is used to construct the
hierarchical aggregation tree (Algorithm \ref{alg:clustering}), to estimate $I_{S,i}(t)$ and $I_{P,i}(\pi_{i,t})$ as in \eqref{Ts} and \eqref{XXX}, hence to compute the optimal SU traffic $a_{i,t}^*$ as in \eqref{rewardinb}. However, the performance is evaluated under the \emph{actual} distance-dependent large-scale pathloss and the realization of the Rayleigh fading process, as described in the next item.
\item For each realization of the network topology, we generate 1000 frames with random SU access decisions; the PU spectrum occupancy process $\mathbf b_t$ evolves according to the Markov process described in Sec. \ref{sysmo}, with $\nu_1{=}0.005$, $\nu_0{=}0.095$; in each frame, the channel is generated according to the distance-dependent large-scale pathloss and
 Rayleigh fading distribution, independent over time and across users, as described in the signal model \eqref{signalmodel}. The SINR is then computed at each SU and PU receiver, and the transmission is declared successful if and only if  
 SINR${>}\mathrm{SINR}_{\mathrm{th}}{=}5\mathrm{dB}$. The SU and PU cell throughputs are then averaged out over the 1000 frames and 100 realizations of the network topology.
\end{itemize}
In addition to IBT, RT, Full-CSI and Uncoordinated schemes mentioned previously, we also evaluate the performance of the consensus-based scheme \cite{li:TVT10}.
We set the degree of each node (cell head) to be $d=5$, based on which we generate a connected graph \cite{Bhuiyan}.
This scheme was originally designed for a single PU cell system without temporal dynamics in the PU spectrum occupancy, 
and therefore it is not optimized to our model, with multiple cells and temporal dynamics of spectrum occupancy in each cell. We argue that a consensus-based scheme, such as \cite{li:TVT10},
 is not well suited to capture the spatial distribution of interference,
nor the temporal dynamics,
 due to the averaging process of consensus in both the spatial and temporal dimensions. Instead, our scheme allows each SU to estimate accurately the state of nearer cells, to which interference will be stronger, and to track more efficiently their temporal dynamics.
Our numerical evaluation in Fig.  \ref{fig:simresreal} confirms this observation: 
the consensus strategy performs poorly, with performance close to the "Uncoordinated" scheme.
On the other hand, the performance of IBT is very close to that of  Full-NSI and significantly outperforms the "Uncoordinated" scheme. This evaluation confirms that, despite the approximation introduced 
in the INR matrix $\boldsymbol{\Phi}{\in} \mathbb{R}^{N_C \times N_C}$, our multi-scale spectrum estimation positively informs network control.

\section{Conclusions}
 \label{conclu}
In this paper, we have proposed a multi-scale approach to spectrum sensing in cognitive cellular networks. 
To reduce the cost of acquisition of NSI, 
we have proposed a hierarchical scheme to obtain aggregate 
state information at multiple scales, at each cell.
We have studied analytically the performance 
 of the aggregation scheme in terms of the trade-off among the SU cell throughput, the interference
generated by their activity to PUs, and the mutual interference of SUs.
We have accounted for aggregation delays, local estimation errors, as well as the cost of aggregation.
We have proposed an agglomerative clustering
 algorithm to find a multi-scale aggregation tree, matched to the structure of interference.
We have shown that our proposed design achieves performance close to that with full NSI, using only one-third of the cost of  exchange of spectrum estimates over the network.

\section*{Appendix A: Proof of Lemma \ref{Lem:smt}}
\label{proofofLem:smt}
 \begin{proof}We prove it by induction.
 At level-$1$,
  (\ref{sdfhb}) holds
by definition, see (\ref{smtp1}).
Now, let $L{>}1$ and assume (\ref{sdfhb})
holds at level-$(L{-}1)$. 
The induction hypothesis in (\ref{aggregationL}) implies
\begin{align}
S_{m,t}^{(L)}
=
\sum_{k\in\mathcal H_{m}^{(L-1)}}
\sum_{j\in\mathcal C_{k}^{(L-1)}}\hat b_{j,t-\delta_j^{(L-1)}-\Delta_{k}^{(L)}}.
\end{align}
Then, using (\ref{deltaupdate}) we obtain
\begin{align}
S_{m,t}^{(L)}
=
\sum_{k\in\mathcal H_{m}^{(L-1)}}
\sum_{j\in\mathcal C_{k}^{(L-1)}}\hat b_{j,t-\delta_j^{(L)}}
=
\sum_{j\in\mathcal C_{m}^{(L)}}
\hat b_{j,t-\delta_j^{(L)}},
\end{align}
where the last step follows from (\ref{recC}).
The induction step, hence the lemma, are thus proved.
\end{proof}
\section*{Appendix B: Proof of Theorem \ref{thm1}}
\label{proofofthm1}
\begin{proof}
Let $t\geq 0$. Eq. (\ref{eq1}) follows from the fact that 
$\sigma_{i,t}^{(L)}$ is independent of
$b_{j,\tau},\forall \tau\leq t$ for $j\notin\mathcal D_i^{(L)}$,
and from the fact that $(b_{j,\tau},M_{j,\tau},\xi_{j,\tau}),\tau\leq t$
are independent across cells.

We now prove (\ref{eq2}) for a given set
$\mathcal D_i^{(L)}$ and h-distance $L$.
With a slight abuse of notation,
"$\forall j$" should be 
intended as "$\forall j{\in}\mathcal D_i^{(L)}$",
and "$\sum_j$" as "$\sum_{j\in\mathcal D_i^{(L)}}$".
Using \eqref{sigmadel},
\begin{align}
&\mathbb P\Bigr(
b_{j,t}=b_j,\forall j\Bigr|
\sigma_{i,\tau}^{(L)}=o_{\tau}^{(L)},\forall \tau \leq t\Bigr)
\nonumber\\&
=
\mathbb P\Bigr(
b_{j,t}=b_j,\forall j\Bigr|
\sum_{j}\hat b_{j,\tau-\delta_j^{(L)}}=o_{\tau}^{(L)},\forall \tau\leq t\Bigr).
\nonumber
\end{align}
We can rewrite it as the marginal with respect to
$b_{j,t-\delta_j^{(L)}},\forall j$
and $\sum_{j}b_{j,t-\delta_j^{(L)}}=x$, yielding
\begin{align}
&\mathbb P\left(
b_{j,t}=b_j,\forall j\left|
\sigma_{i,\tau}^{(L)}=o_{\tau}^{(L)},\forall \tau \leq t\right.\right)
\label{4}
\\&
=
\sum_{x=0}^{|\mathcal D_i^{(L)}|}
\nonumber
\sum_{(\tilde b_{j,t-\delta_j^{(L)}})_{\forall j}}
\mathbb P\Bigr(
b_{j,t}=b_j,\forall j\Bigr|
b_{j,t-\delta_j^{(L)}}{=}
\tilde b_{j,t-\delta_j^{(L)}},\forall j,
\nonumber\\&\quad\nonumber
\quad
\sum_{j}b_{j,t-\delta_j^{(L)}}{=}x,
\sum_{j}\hat b_{j,\tau-\delta_j^{(L)}}=o_{\tau}^{(L)},\forall \tau\leq t\Bigr)
\tag{D;E}
\label{DE}
\\&\times
\nonumber
\mathbb P\Bigr(
b_{j,t-\delta_j^{(L)}}=\tilde b_{j,t-\delta_j^{(L)}},\forall j
\Bigr|
\sum_{j}b_{j,t-\delta_j^{(L)}}=x,
\nonumber\\&\qquad\nonumber
\quad
\sum_{j}\hat b_{j,\tau-\delta_j^{(L)}}=o_{\tau}^{(L)},\forall \tau\leq t\Bigr)
\tag{B;C}
\label{BC}
\\&{\times}
\mathbb P\Bigr(
\sum_{j}b_{j,t-\delta_j^{(L)}}{=}x\Bigr|
\sum_{j}\hat b_{j,\tau-\delta_j^{(L)}}{=}o_{\tau}^{(L)},\forall \tau{\leq}t\Bigr).
\tag{A}
\end{align}
Using the fact that $\{b_{j,t}\}$ is Markov and i.i.d. across cells, 
for the term \eqref{DE} we obtain
\begin{align}
&\mathbb P\Bigr(
b_{j,t}{=}b_j,\forall j\Bigr|
b_{j,t-\delta_j^{(L)}}{=}\tilde b_{j,t-\delta_j^{(L)}},\forall j,
\nonumber\\&\qquad\nonumber
\sum_{j}b_{j,t-\delta_j^{(L)}}=x,\sum_{j}\hat b_{j,\tau-\delta_j^{(L)}}=o_{\tau}^{(L)},\forall \tau\leq t\Bigr)
\\&=
\prod_j\mathbb P\left(
b_{j,t}=b_j\left|
b_{j,t-\delta_j^{(L)}}=\tilde b_{j,t-\delta_j^{(L)}}\right.\right),
\label{3}
\end{align}
since $b_{j,t}$ is independent of all other quantities 
given $b_{j,t-\delta_j^{(L)}}$.
In particular,
the probability term in (\ref{3})
is the $\delta_j^{(L)}$ steps transition probability of the Markov chain $\{b_{j,\tau},\forall \tau\}$, i.e.,
\begin{align}
\label{2}
&\mathbb P\left(
b_{j,t}=b_j
\left|
b_{j,t-\delta_j^{(L)}}=\tilde b_{j}
\right.\right)
\\&\nonumber
{=}
\left[\pi_B{+}\mu^{\delta_j^{(L)}}\left(\tilde b_{j}{-}\pi_B\right)\right]^{b_j}
\left[1{-}\pi_B{-}\mu^{\delta_j^{(L)}}\left(\tilde b_{j}{-}\pi_B\right)\right]^{1{-}b_j}.
\end{align}
which is equivalent to the terms $D $ and $E$ in \eqref{eq2}.
Next, letting $\mathbf b_t^{(\delta)}=(
b_{j,t-\delta_j^{(L)}})_{\forall j}$, we show that the term \eqref{BC} is equivalent to
$B$ and $C$ in (\ref{eq2}). In fact,
\begin{align}
&\mathbb P\Bigr(
\mathbf b_t^{(\delta)}
{=}
\tilde{\mathbf b}
\Bigr|
\sum_j b_{j,t}^{(\delta)}{=}x,
\sum_{j}\hat b_{j,\tau-\delta_j^{(L)}}{=}o_{\tau}^{(L)},\forall \tau\leq t\Bigr)
\nonumber\\&
=\chi\Bigr(\sum_j
\tilde b_{j}=x
\Bigr)
\frac{x!(|\mathcal D_i^{(L)}|-x)!}{|\mathcal D_i^{(L)}|!}.
\label{1}
\end{align}
We obtain (\ref{eq2}) by substituting (\ref{3})-(\ref{1}) into (\ref{4}).

To see (\ref{1}), 
first note that, if 
$\sum_j
\tilde b_{j}\neq x$,
then (\ref{1}) must be zero, since we are conditioning on 
$\sum_j b_{j,t}^{(\delta)}{=}x$.
Thus, we focus on the case 
$\sum_j\tilde b_{j}=x$.
Let $s_j=(M_{j,\tau},{\xi}_{j,\tau})_{\forall j,
-\delta_j^{(L)}\leq\tau\leq t-\delta_j^{(L)}}$
and $\tilde s_j$ be a specific realization of the estimation process.
From the expression of the local estimator, we note that 
$\hat b_{j,\tau}$ is a function of 
$(\hat b_{j,\tau-1},M_{j,\tau},\xi_{j,\tau})$.
Then, by induction, 
$\hat b_{j,\tau}$ is a function of 
$(M_{j,\tau^\prime},\xi_{j,\tau^\prime})_{-\delta_j^{(L)}\leq\tau^\prime\leq\tau}$
(and thus of $s_j$), denoted as
\\
\centerline{$\hat b_{j,\tau}=
g_{\tau+\delta_j^{(L)}}(s_j).$}
\\
Note that the subscript  $\tau+\delta_j^{(L)}$
signifies that  the first 
$\tau+\delta_j^{(L)}+1$ samples of 
$(M_{j,\tau^\prime},\xi_{j,\tau^\prime})$
are used to compute $\hat b_{j,\tau}$,
since $-\delta_j^{(L)}\leq\tau^\prime\leq\tau$.
Importantly, $\hat b_{j,\tau}$ depends on the cell index $j$ only through 
$\delta_j^{(L)}$ and $s_j$, so that
\\
\centerline{$\sum_j\hat b_{j,\tau-\delta_j^{(L)}}
=
\sum_jg_{\tau}(s_j).$}
\\
Let $\mathcal {BS}$ be the set of 
tuples $(\mathbf b,\mathbf s)$
such that 
\\\centerline{
$\sum_j b_{j}=x
,\ 
\sum_jg_{\tau}(s_j)=o_{\tau}^{(L)},\forall 0\leq\tau\leq t.$}
\\
Using this definition, we write
the left hand side of
(\ref{1}) as
\begin{align}
&\mathbb P\Bigr(
\mathbf b_{t}^{(\delta)}{=}\tilde{\mathbf b}
\Bigr|
\sum_{j}b_{j}=x,
\sum_{j}\hat b_{j,\tau-\delta_j^{(L)}}=o_{\tau}^{(L)},\forall \tau\leq t\Bigr)
\nonumber\\&
=\mathbb P\left(
\mathbf b_{t}^{(\delta)}=\tilde{\mathbf b}
\left|
(\mathbf b_{t}^{(\delta)},\mathbf s)\in\mathcal {BS}
\right.\right).
\label{1_2}
\end{align}
Consider a permutation $\mathcal P:\mathcal D_i^{(L)}\mapsto\mathcal D_i^{(L)}$ of the elements in the set $\mathcal D_i^{(L)}$.
Thus,
\begin{align}
\left\{
\begin{array}{l}
\sum_jg_{\tau}(s_{\mathcal P(j)})
=
\sum_jg_{\tau}(s_j),
\\
\sum_j b_{\mathcal P(j)}
=
\sum_j b_{j}=x.
\end{array}
\right.
\end{align}
By definition of $\mathcal {BS}$,
if $(\mathbf b,\mathbf s){\in}\mathcal {BS}$ then, under any permutation, $(\mathbf b_{\mathcal P},\mathbf s_{\mathcal P})
{\in}\mathcal {BS}$,
where 
$\mathbf b_{\mathcal P}=(b_{\mathcal P(j)})_{\forall j}$
and
$\mathbf s_{\mathcal P}=(s_{\mathcal P(j)})_{\forall j}$. We can thus partition
$\mathcal {BS}$ into 
$|\mathcal U|$ sets,
$\mathcal {BS}_{u},u\in\mathcal U$,
where $\mathcal U$ is a set of indexes,
such that
$ \mathcal {BS}_u$ contains all and only  the permutations of its elements, that is
\begin{align}
\left\{
\begin{array}{l}
(\mathbf b,\mathbf s)\in\mathcal {BS}_u\Leftrightarrow
(\mathbf b_{\mathcal P},\mathbf s_{\mathcal P})
\in\mathcal {BS}_u,\ \forall \mathcal P,
\\
(\mathbf b^{(1)},\mathbf s^{(1)})
\in\mathcal {BS}_u,
(\mathbf b^{(1)},\mathbf s^{(1)})\neq
(\mathbf b^{(2)}_{\mathcal P},\mathbf s^{(2)}_{\mathcal P}),\forall \mathcal P\\\qquad\qquad\Rightarrow
(\mathbf b^{(2)}_{\mathcal P},\mathbf s^{(2)}_{\mathcal P})\notin\mathcal {BS}_u,
\\
\cup_{u\in\mathcal U}\mathcal {BS}_u\equiv\mathcal{BS},
\ 
\mathcal {BS}_{u_1}\cap\mathcal {BS}_{u_2}\equiv\emptyset,\forall u_1\neq u_2.
\end{array}
\right.
\end{align}
By marginalizing with respect to the realization of the sequence $(\mathbf b,\mathbf s)$,
 we then obtain
 \begin{align}
\label{1_3}
&
\mathbb P\left(
\mathbf b_{t}^{(\delta)}{=}\tilde{\mathbf b}
\left|
(\mathbf b_{t}^{(\delta)},\mathbf s)\in\mathcal {BS}
\right.\right)
\nonumber\\&
=
\sum_{u\in\mathcal U}
\sum_{(\bar{\mathbf b},\bar{\mathbf s})\in\mathcal {BS}_u}
\chi(\tilde{\mathbf b}{=}\bar{\mathbf b})
\mathbb P\bigl(
(\mathbf b_{t}^{(\delta)},\mathbf s){=}
(\bar{\mathbf b},\bar{\mathbf s})
\left|
(\mathbf b_{t}^{(\delta)},\mathbf s)\in\mathcal {BS}_u
\right.\bigr)
\nonumber
\\&\qquad\times\mathbb P\bigl(
(\mathbf b_{t}^{(\delta)},\mathbf s)\in\mathcal {BS}_u
\left|
(\mathbf b_{t}^{(\delta)},\mathbf s)\in\mathcal {BS}
\right.\bigr).
\end{align}
Let $(\mathbf b^{(1)},\mathbf s^{(1)}){\in}\mathcal {BS}_u$ and 
$(\mathbf b^{(2)},\mathbf s^{(2)}){\in}\mathcal {BS}_u$. By definition of
 $\mathcal {BS}_u$, we have that  $(\mathbf b^{(2)},\mathbf s^{(2)})
{=}(\mathbf b_{\mathcal P}^{(1)},\mathbf s_{\mathcal P}^{(1)})
$ under some permutation 
$\mathcal P$. Since $\{(b_{j,\tau},M_{j,\tau},\xi_{j,\tau}),-\delta_j^{(L)}\leq\tau\leq t-\delta_j^{(L)}\}$
is stationary over time and i.i.d. across cells,
by permuting this sequence 
across cells, we obtain a sequence 
with the same probability of occurrence;
in other words,
\begin{align}
&\mathbb P\left(
(\mathbf b_{t}^{(\delta)},\mathbf s)
=
(\mathbf b^{(1)},\mathbf s^{(1)})
\left|
(\mathbf b_{t}^{(\delta)},\mathbf s)
\in\mathcal {BS}_u
\right.\right)
\nonumber
\\&\nonumber
=\mathbb P\left(
(\mathbf b_{t}^{(\delta)},\mathbf s)
=
(\mathbf b^{(2)},\mathbf s^{(2)})
\left|
(\mathbf b_{t}^{(\delta)},\mathbf s)
\in\mathcal {BS}_u
\right.\right)
\\&
=
\mathbb P\left(
(\mathbf b_{t}^{(\delta)},\mathbf s)
=
(\mathbf b_{\mathcal P}^{(1)},\mathbf s_{\mathcal P}^{(1)})
\left|
(\mathbf b_{t}^{(\delta)},\mathbf s)
\in\mathcal {BS}_u
\right.\right),\ \forall \mathcal P.
\end{align}
Hence, $(\mathbf b_{t}^{(\delta)},\mathbf s)
$ has uniform distribution
over the set $\mathcal {BS}_u$, and
we must have 
\\
\centerline{$\mathbb P\left(
(\mathbf b_{t}^{(\delta)},\mathbf s)
{=}
(\mathbf b^{(1)},\mathbf s^{(1)})
\left|
(\mathbf b_{t}^{(\delta)},\mathbf s)
{\in}\mathcal {BS}_u
\right.\right)
{=}\frac{1}{|\mathcal {BS}_u|}
{=}\frac{1}{|\mathcal D_i^{(L)}|!},$}\\
corresponding to all possible permutations.
Substituting in (\ref{1_3}), we then obtain
 \begin{align}
&
\mathbb P\left(
\mathbf b_{t}^{(\delta)}=\tilde{\mathbf b}
\left|
(\mathbf b_{t}^{(\delta)},\mathbf s)\in\mathcal {BS}
\right.\right)
{=}
\frac{1}{|\mathcal D_i^{(L)}|!}
\sum_{u\in\mathcal U}
\sum_{(\bar{\mathbf b},\bar{\mathbf s})\in\mathcal {BS}_u}
\nonumber
\\&
\qquad\chi\left(\tilde{\mathbf b}{=}\bar{\mathbf b}\right)
\mathbb P\bigl(
(\mathbf b_{t}^{(\delta)},\mathbf s){\in}\mathcal {BS}_u
\left|
(\mathbf b_{t}^{(\delta)},\mathbf s){\in}\mathcal {BS}
\right.\bigr).
\label{1_4}
\end{align}
Since there are exactly
$x!(|\mathcal D_i^{(L)}|{-}x)!$
combinations of $(\mathbf b_{t}^{(\delta)},\mathbf s)$ within $\mathcal {BS}_u$ such that 
$\mathbf b_{t}^{(\delta)}{=}\tilde{\mathbf b}$
(since 
$\sum_j\tilde b_{j}{=}x$ by assumption), we obtain
\begin{align}
\sum_{(\bar{\mathbf b},\bar{\mathbf s})\in\mathcal {BS}_u}
\chi(\tilde{\mathbf b}=\bar{\mathbf b})
=
x!(|\mathcal D_i^{(L)}|-x)!.
\end{align}
Substituting in (\ref{1_4}), we finally obtain
\begin{align}
\nonumber
&
\mathbb P\left(
\mathbf b_{t}^{(\delta)}=\tilde{\mathbf b}
\left|
(\mathbf b_{t}^{(\delta)},\mathbf s)\in\mathcal {BS}
\right.\right)
\\&\nonumber
=
\frac{
x!(|\mathcal D_i^{(L)}|-x)!
}{|\mathcal D_i^{(L)}|!}
\sum_{u\in\mathcal U}
\mathbb P\left(
(\mathbf b_{t}^{(\delta)},\mathbf s)\in\mathcal {BS}_u
\left|
(\mathbf b_{t}^{(\delta)},\mathbf s)\in\mathcal {BS}
\right.\right)
\nonumber\\&=
\frac{
x!(|\mathcal D_i^{(L)}|-x)!
}{|\mathcal D_i^{(L)}|!},
\end{align}
which proves (\ref{1}) when $\sum_j\tilde b_{j}{=}x$.
Eq. (\ref{eq2}) is thus proved.

To conclude the proof of Theorem \ref{thm1}, we prove (\ref{eq2_2}).
We rewrite the left hand side of (\ref{eq2_2}) as
\begin{align}
&\Theta\triangleq
\mathbb E\Bigr(
\sum_{j}b_{j,t}^{(\delta)}
\Bigr|
\sum_jg_{\tau}(s_j)=o_{\tau}^{(L)}, 0{\leq}\tau{\leq}t
\Bigr).
\end{align}
Now, assume a genie-aided case
which directly observes the sequence $\mathbf s$,
rather than the aggregates 
$\sum_jg_{\tau}(s_j),\forall 0\leq\tau\leq t$.
Using the notation
of the previous part of the proof, 
 let $\tilde{\mathbf s}$
be a specific realization such that
$\sum_jg_{\tau}(\tilde s_j){=}o_{\tau}^{(L)},\forall 0{\leq}\tau{\leq}t$.
In the genie aided case, by the linearity of expectation we obtain
\begin{align*}
&\mathbb E\Bigr(
\sum_{j}b_{j,t}^{(\delta)}\Bigr|
\mathbf s=\tilde{\mathbf s}
\Bigr)
=
\sum_{j}
\mathbb P\left(\left.
b_{j,t}^{(\delta)}{=}1\right|
\mathbf s{=}\tilde{\mathbf s}
\right).
\end{align*}
Since $b_{j,t}^{(\delta)}$ is
statistically independent of $s_{j^\prime}$ for
$j^\prime\neq j$ given $s_{j}$, 
by definition of $s_j$ it follows that
\begin{align*}
&\mathbb E\Bigr(
\sum_{j}b_{j,t}^{(\delta)}\Bigr|
\mathbf s=\tilde{\mathbf s}
\Bigr)
=
\nonumber
\sum_{j}
\mathbb P\left(\left.
b_{j,t}^{(\delta)}=1\right|
s_{j}=\tilde s_{j}
\right)
\\&
{=}
\sum_{j}
\mathbb P\left(
b_{j,t}^{(\delta)}{=}1\left|
(M_{j,\tau},\xi_{j\tau})
{=}(\tilde M_{j,\tau},\tilde{\xi}_{j\tau}),
\vphantom{\tilde{\xi}_{j\tau}}
{-}\delta_j^{(L)}{\leq}\tau{\leq} t{-}\delta_j^{(L)}
\right.
\right)
\nonumber\\&
=
\sum_{j}
\hat b_{j,t-\delta_j^{(L)}}
\triangleq
g_t(\mathbf s).
\nonumber
\end{align*}
Thus, $g_t(\mathbf s)$ is sufficient to 
compute the posterior expectation of 
$\sum_{j}b_{j,t}^{(\delta)}$ in the genie-aided case.
Since $g_t(\mathbf s)$ is also available in the non-genie-aided case, it must be the case that $\Theta=g_t(\mathbf s)$ as well, yielding (\ref{eq2_2}) via \eqref{sigmadel}.
The theorem is thus proved.
\end{proof}
 \section*{Appendix C: Proof of Lemma \ref{lem2}}
\label{proofoflem2}
\begin{proof}
Let  $0{\leq}L{\leq}D$ and $j\in\mathcal D_i^{(L)}$. 
Using (\ref{eq1}) we obtain
\begin{align}
\label{x1}
&\mathbb P(b_{j,t}=1|\pi_{i,t})
=
\sum_{\mathbf b}\chi(b_j=1)\pi_{i,t}(\mathbf b)
\\&
{=}\nonumber
\sum_{b_{j^\prime},\forall j^\prime\in\mathcal D_i^{(L)}}
\chi(b_{j}{=}1)
\mathbb P\left(\left .b_{j^\prime,t}{=}b_{j^\prime},\forall j^\prime{\in}\mathcal D_i^{(L)}\right|
\sigma_{i,\tau}^{(L)}{=}o_{\tau}^{(L)},\forall{\tau}{\leq}t\right).
\end{align}
Since we are considering only the cells in the set
$\mathcal D_i^{(L)}$,
with a slight abuse of notation,
"$\forall j$" should be intended as "$\forall j{\in}\mathcal D_i^{(L)}$";
"$\sum_j$" as "$\sum_{j\in\mathcal D_i^{(L)}}$";
and vectors are restricted to their indices in $\mathcal D_i^{(L)}$.
Let $\mathbf b_t^{(\delta)}=(b_{j,t-\delta_j^{(L)}})_{\forall j}$.
Using (\ref{eq2}), we can rewrite (\ref{x1}) as
\begin{align}
&\mathbb P(b_{j,t}=1|\pi_{i,t})
=
\sum_{\tilde{\mathbf b}}
\mathbb P(b_{j,t}=1|b_{j,t}^{(\delta)}=\tilde b_j)
\nonumber\\&\qquad\times
\mathbb P\left(\left .
\mathbf b_t^{(\delta)}=\tilde{\mathbf b}
\right|\sigma_{i,\tau}^{(L)}=o_{\tau}^{(L)},\forall \tau \leq t\right),
\end{align}
where
\\\centerline{
$\mathbb P(b_{j,t}=1|b_{j,t}^{(\delta)}=\tilde b_j)
=\pi_B+\mu^{\delta_j^{(L)}}\left(\tilde b_{j}-\pi_B\right).$}\\
($\delta_j^{(L)}$ steps transition probability to
$b_{j,t}=1$)
and
\begin{align}
\label{y2}
&\mathbb P\Bigr(
\mathbf b_t^{(\delta)}{=}\tilde{\mathbf b}
\Bigr|\sigma_{i,\tau}^{(L)}=o_{\tau}^{(L)},\forall \tau \leq t\Bigr)
\nonumber\\&
{=}
\sum_{x=0}^{|\mathcal D_i^{(L)}|}
\mathbb P\Bigr(
\sum_{j}b_{j,t}^{(\delta)}
{=}x\Bigr|
\sigma_{i,\tau}^{(L)}=o_{\tau}^{(L)},\forall \tau \leq t\Bigr)
\nonumber\\&\qquad\times
\frac{x!(|\mathcal D_i^{(L)}|-x)!}{|\mathcal D_i^{(L)}|!}
\chi\Bigr(\sum_{j}\tilde b_{j}=x\Bigr).
\end{align}
Thus, we obtain
\begin{align}
\label{w2}
&\mathbb P(b_{j,t}=1|\pi_{i,t})
=
\pi_B(1-\mu^{\delta_j^{(L)}})
\nonumber\\&\quad
+\mu^{\delta_j^{(L)}}
\sum_{\tilde{\mathbf b}}\tilde b_{j}
\mathbb P\left(\left .
\mathbf b_t^{(\delta)}=\tilde{\mathbf b}
\right|\sigma_{i,\tau}^{(L)}=o_{\tau}^{(L)},\forall \tau \leq t\right).
\end{align}
Now, using (\ref{y2}) we obtain
\begin{align}
\label{w1}
&\sum_{\tilde{\mathbf b}}\tilde b_{j}
\mathbb P\Bigr(
\mathbf b_t^{(\delta)}=\tilde{\mathbf b}
\Bigr|\sigma_{i,\tau}^{(L)}=o_{\tau}^{(L)},\forall \tau \leq t\Bigr)\\&\nonumber
=
\sum_{x=1}^{|\mathcal D_i^{(L)}|}
\mathbb P\Bigr(
\sum_{j^\prime}b_{j^\prime}
=x\Bigr|
\sigma_{i,\tau}^{(L)}=o_{\tau}^{(L)},\forall \tau \leq t\Bigr)
\nonumber\\&\qquad\times
\frac{x!(|\mathcal D_i^{(L)}|-x)!}{|\mathcal D_i^{(L)}|!}
\sum_{\tilde{\mathbf b}}\tilde b_{j}
\chi\Bigr(\sum_{j^\prime}\tilde b_{j^\prime}=x\Bigr).
\nonumber
\end{align}
Note that the sum over $x$ starts from $x=1$ instead of $x=0$. In fact, 
if $x=0$, then $\tilde{\mathbf b}=\mathbf 0$ and $\tilde b_j=0$,
which does not contribute to
(\ref{w1}).
Finally, since there are  $|\mathcal D_i^{(L)}|-1$ over $x-1$
possible combinations of vectors $\tilde{\mathbf b}\in\{0,1\}^{|\mathcal D_i^{(L)}|}$
such that
$\tilde b_{j}=1$ and $\sum_{j^\prime}\tilde b_{j^\prime}=x$,
 we obtain
\\\centerline{
$ \frac{x!(|\mathcal D_i^{(L)}|-x)!}{|\mathcal D_i^{(L)}|!}
\sum_{\tilde{\mathbf b}}\tilde b_{j}
\chi\Bigr(\sum_{j^\prime}\tilde b_{j^\prime}=x\Bigr)
=
\frac{x}{|\mathcal D_i^{(L)}|},
$}\\
 hence
 \begin{align}
 \label{w3}
&\sum_{\tilde{\mathbf b}}\tilde b_{j}
\mathbb P\Bigr(
\mathbf b_t^{(\delta)}=\tilde{\mathbf b}
\Bigr|\sigma_{i,\tau}^{(L)}=o_{\tau}^{(L)},\forall \tau \leq t\Bigr)
\\&\nonumber
{=}
\frac{1}{|\mathcal D_i^{(L)}|}
\sum_{x{=}0}^{|\mathcal D_i^{(L)}|}x
\mathbb P\Bigr(
\sum_{j^\prime}b_{j^\prime}
{=}x\Bigr|
\sigma_{i,\tau}^{(L)}{=}o_{\tau}^{(L)},\forall\tau{\leq}t\Bigr)
{=}
\frac{o_{t}^{(L)}}{|\mathcal D_i^{(L)}|},
\end{align}
where in the last step we used 
(\ref{eq2_2}).
The lemma is thus proved by substituting
(\ref{w3}) into (\ref{w2}).
\end{proof}
\bibliographystyle{IEEEtran}
\bibliography{IEEEabrv,bibliography} 

\begin{IEEEbiography}
    [{\includegraphics[width=1in,height=1.25in,clip,keepaspectratio]{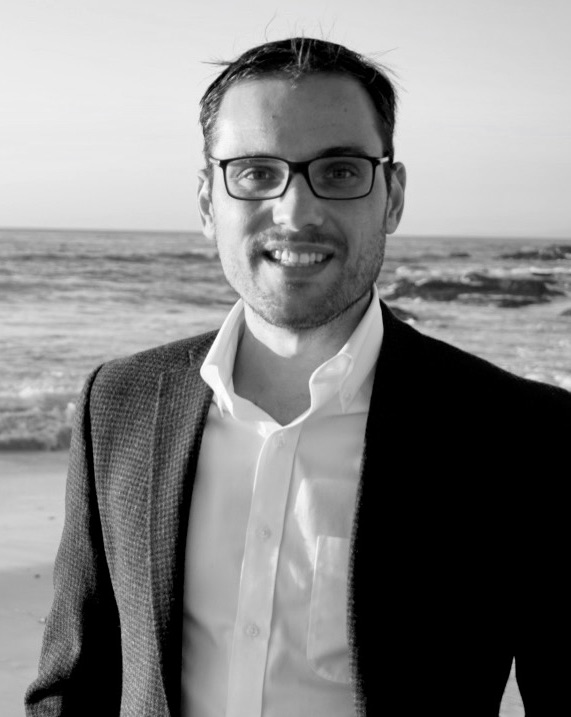}}]{Nicolo Michelusi}(S'09, M'13, SM'18) received the B.Sc. (with honors), M.Sc. (with honors) and Ph.D. degrees from the University of Padova, Italy, in 2006, 2009 and 2013, respectively, and the M.Sc. degree in Telecommunications Engineering from the Technical University of Denmark in 2009, as part of the T.I.M.E. double degree program. He was a post-doctoral research fellow at the Ming-Hsieh Department of Electrical Engineering, University of Southern California, USA, in 2013-2015. He is currently an Assistant Professor at the School of Electrical and Computer Engineering at Purdue University, IN, USA. His research interests lie in the areas of 5G wireless networks, millimeter-wave communications, stochastic optimization, distributed optimization. Dr. Michelusi serves as Associate Editor for the IEEE Transactions on Wireless Communications, and as a reviewer for several IEEE Transactions.
\end{IEEEbiography}

\begin{IEEEbiographynophoto}
{Matthew Nokleby} (S04--M13) received the B.S. (cum laude) and M.S. degrees from Brigham Young University, Provo, UT, in 2006 and 2008, respectively, and the Ph.D. degree from Rice University, Houston, TX, in 2012, all in electrical engineering. From 2012--2015 he was a postdoctoral research associate in the Department of Electrical and Computer Engineering at Duke University, Durham, NC. In 2015 he joined the Department of Electrical and Computer Engineering at Wayne State University as an assistant professor. His research interests span machine learning, signal processing, and information theory, including distributed learning and optimization, sensor networks, and wireless communication. Dr. Nokleby received the Texas Instruments Distinguished Fellowship (2008-2012) and the Best Dissertation Award (2012) from the Department of Electrical and Computer Engineering at Rice University.
\end{IEEEbiographynophoto}

\begin{IEEEbiographynophoto}
{Urbashi Mitra} received the B.S. and the M.S. degrees from the University of California at Berkeley and her Ph.D. from Princeton University.  Dr. Mitra is currently the Gordon S. Marshall Professor in Engineering at the University of Southern California with appointments in Electrical Engineering and Computer Science. She is the inaugural Editor-in-Chief for the IEEE Transactions on Molecular, Biological and Multi-scale Communications. She has been a member of the IEEE Information Theory Society's Board of Governors (2002-2007, 2012-2017), the IEEE Signal Processing Society's Technical Committee on Signal Processing for Communications and Networks (2012-2016), the IEEE Signal Processing Society's Awards Board (2017-2018), and the Vice Chair of the IEEE Communications Society, Communication Theory Working Group (2017-2018). Dr. Mitra is a Fellow of the IEEE.  She is the recipient of: the 2017 IEEE Women in Communications Engineering Technical Achievement Award, a 2015 UK Royal Academy of Engineering Distinguished Visiting Professorship, a 2015 US Fulbright Scholar Award, a 2015-2016 UK Leverhulme Trust Visiting Professorship, IEEE Communications Society Distinguished Lecturer, 2012 Globecom Signal Processing for Communications Symposium Best Paper Award, 2012 US National Academy of Engineering Lillian Gilbreth Lectureship, the 2009 DCOSS Applications \& Systems Best Paper Award, Texas Instruments Visiting Professorship (Fall 2002, Rice University), 2001 Okawa Foundation Award, 2000 Ohio State University's College of Engineering Lumley Award for Research, 1997 Ohio State University's College of Engineering MacQuigg Award for Teaching, and a 1996 National Science Foundation CAREER Award.  She has been an Associate Editor for the following IEEE publications: Transactions on Signal Processing, Transactions on Information Theory, Journal of Oceanic Engineering, and Transactions on Communications.  Dr. Mitra has held visiting appointments at: King's College, London, Imperial College, the Delft University of Technology, Stanford University, Rice University, and the Eurecom Institute. Her research interests are in: wireless communications, communication and sensor networks, biological communication systems, detection and estimation and the interface of communication, sensing and control.
\end{IEEEbiographynophoto}

\begin{IEEEbiographynophoto}
{Robert Calderbank} (M'89, SM'97, F'98) received the BSc degree in 1975 from Warwick University, England, the MSc degree in 1976 from Oxford University, England, and the PhD degree in 1980 from the California Institute of Technology, all in mathematics.

Dr. Calderbank is Professor of Electrical and Computer Engineering at Duke University where he directs the Information Initiative at Duke (iiD). Prior to joining Duke in 2010, Dr. Calderbank was Professor of Electrical Engineering and Mathematics at Princeton University where he directed the Program in Applied and Computational Mathematics. Prior to joining Princeton in 2004, he was Vice President for Research at AT\&T, responsible for directing the first industrial research lab in the world where the primary focus is data at scale. At the start of his career at Bell Labs, innovations by Dr. Calderbank were incorporated in a progression of voiceband modem standards that moved communications practice close to the Shannon limit. Together with Peter Shor and colleagues at AT\&T Labs he developed the mathematical framework for quantum error correction. He is a co-inventor of space-time codes for wireless communication, where correlation of signals across different transmit antennas is the key to reliable transmission.

Dr. Calderbank served as Editor in Chief of the IEEE TRANSACTIONS ON INFORMATION THEORY from 1995 to 1998, and as Associate Editor for Coding Techniques from 1986 to 1989. He was a member of the Board of Governors of the IEEE Information Theory Society from 1991 to 1996 and from 2006 to 2008. Dr. Calderbank was honored by the IEEE Information Theory Prize Paper Award in 1995 for his work on the Z4 linearity of Kerdock and Preparata Codes (joint with A.R. Hammons Jr., P.V. Kumar, N.J.A. Sloane, and P. Sole), and again in 1999 for the invention of space-time codes (joint with V. Tarokh and N. Seshadri). He has received the 2006 IEEE Donald G. Fink Prize Paper Award, the IEEE Millennium Medal, the 2013 IEEE Richard W. Hamming Medal, and the 2015 Shannon Award. He was elected to the US National Academy of Engineering in 2005.
\end{IEEEbiographynophoto}

\end{document}